\def\Ac{{\mathcal A}}
\def\Abb{{\mathbb A}}
\def\Cc{{\mathcal C}}
\def\Cbb{{\mathbb C}}
\def\cfr{{\mathfrak c}}
\def\Dc{{\mathcal D}}
\def\Dbb{{\mathbb D}}
\def\Ec{{\mathcal E}}
\def\Ebb{{\mathbb E}}
\def\Gc{{\mathcal G}}
\def\Hc{{\mathcal H}}
\def\Lc{{\mathcal L}}
\def\Mc{{\mathcal M}}
\def\Nc{{\mathcal N}}
\def\Oc{{\mathcal O}}
\def\Rbb{{\mathbb R}}
\def\Sbb{{\mathbb S}}
\def\Tc{{\mathcal T}}
\def\Vc{{\mathcal V}}
\def\0{{\bf 0}}
\newcommand{\bitem}{\begin{itemize}}
\newcommand{\eitem}{\end{itemize}}
\newcommand{\btabular}{\begin{tabular}}
\newcommand{\etabular}{\end{tabular}}
\newcommand{\bcenter}{\begin{center}}
\newcommand{\ecenter}{\end{center}}
\newcommand{\bea}{\begin{eqnarray}}
\newcommand{\eea}{\end{eqnarray}}
\newcommand{\bean}{\begin{eqnarray*}}
\newcommand{\eean}{\end{eqnarray*}}
\newcommand{\ba}{\left[ \begin{array}}
\newcommand{\ea}{\\ \end{array} \right]}
\newcommand{\bear}{\begin{array}}
\newcommand{\eear}{\\ \end{array}}
\newcommand{\non}{\nonumber}
\newcommand{\ra}{\rightarrow}
\newcommand*{\QET}{\hfill\ensuremath{\triangleleft}}
\newcommand{\norm}[1]{\left\lVert#1\right\rVert}
\newcounter{subequation}
\def\beasub{\addtocounter{equation}{+1}
\setcounter{subequation}{\value{equation}}
\setcounter{equation}{0}
\renewcommand{\theequation}{\arabic{subequation}\alph{equation}}
\begin{eqnarray}}
\def\eeasub{\end{eqnarray}
\setcounter{equation}{\value{subequation}}
\renewcommand{\theequation}{\arabic{equation}}}
\def\inf{\operatornamewithlimits{inf\vphantom{p}}}
\newtheorem{Lemma}{Lemma}
\newtheorem{Theorem}{Theorem}
\newtheorem{Definition}{Definition}
\newtheorem{Assumption}{Assumption}
\newtheorem{Remark}{Remark}
\newtheorem{Problem}{Problem}
\newtheorem{Proposition}{Proposition}
\newcommand{\multiline}[1]{%
  \begin{tabularx}{\linewidth-\ALG@thistlm-0.0cm}[t]{@{}X@{}}
    #1
  \end{tabularx}
}
\def\BibTeX{{\rm B\kern-.05em{\sc i\kern-.025em b}\kern-.08em
T\kern-.1667em\lower.7ex\hbox{E}\kern-.125emX}}
\begin{document}
\title{Security Allocation in Networked Control Systems under Stealthy Attacks
}
%\author{Anh Tung Nguyen \orcid{0000-0001-9316-233X}, André M. H. Teixeira \orcid{0000-0001-5491-4068}, Alexander Medvedev \orcid{0000-0002-6608-250X}
\author{Anh Tung Nguyen, André M. H. Teixeira, Alexander Medvedev 
\thanks{ This work is supported by the Swedish Research Council under the
grants 2018-04396 and 2021-06316 and by the Swedish Foundation for Strategic Research.}
\thanks{Anh Tung Nguyen, Andr{\'e} M. H. Teixeira, and Alexander Medvedev are with the Department of Information Technology, Uppsala University, PO Box 337, SE-75105, Uppsala, Sweden (e-mail: \{anh.tung.nguyen, andre.teixeira, alexander.medvedev\}@it.uu.se).}
}
	
\maketitle
	
\begin{abstract}
This paper considers the problem of security allocation in a networked control system under stealthy attacks. The system is comprised of interconnected subsystems represented by vertices.
%launched by a malicious adversary.
%%
%One vertex is selected secretly to represent the local performance of the entire network.
%%
A malicious adversary selects a single vertex on which to conduct a stealthy data injection attack {with the purpose of} maximally {disrupting a distant target vertex} while remaining undetected.
Defense resources against the adversary are allocated by 
%regardless of the location of the attack vertex, 
a defender
%, respecting the sensor budget, 
on several selected vertices.
%place sensors to monitor the network against the adversary.
%%
%The purpose of the defender is to maximally alleviate the worst-case impact of stealthy attacks.
%%
% We adopt the output-to-output gain to fully explore the worst-case impact of stealthy attacks.
%%
%Due to keeping the local performance secret, 
First, the objectives of the adversary and the defender with uncertain targets are formulated in a probabilistic manner, resulting in an expected worst-case impact of stealthy attacks.
Next, we provide a graph-theoretic necessary and sufficient condition under which the cost for the defender and the expected worst-case impact of stealthy attacks are bounded.
This condition enables the defender to restrict the admissible actions to 
{dominating sets of the graph representing the network.}
%a subset of available vertex sets.
%%
Then, {the security allocation problem is solved through}
%we cast the problem of 
%optimal sensor placement 
%security allocation
a Stackelberg game-theoretic framework. % to find the best actions of the two agents.
%%
%An algorithm is presented to seek the Stackelberg optimal strategy.
%%
Finally, {the obtained results are validated through}
%the contribution of this paper is highlighted by 
%utilizing the 
%proposing admissible actions of the defender in %the context of large-scale networks.
%by discussing the reduction in computational complexity when it comes to large-scale networks.
%%
a numerical example of a 50-vertex networked control system.
% is presented to validate the obtained results.
\end{abstract}
	
\begin{IEEEkeywords}
		%Enter keywords or phrases in alphabetical 
		% order, separated by commas. For a list of suggested keywords, send a blank 
		% e-mail to keywords@ieee.org or visit \underline
		% {http://www.ieee.org/organizations/pubs/ani\_prod/keywrd98.txt}
Cyber-physical security, networked control system, Stackelberg game, stealthy attack.
\end{IEEEkeywords}
%=============================================
%%----------------------------------
\section{Introduction}
\label{sec:introduction}
%%----------------------------------
%The notion of networked control systems has proven its utility in modeling, analyzing, and controlling real-world large-scale interconnected systems, including but not limited to transportation networks, power systems, and water distribution networks.
Networked control systems are ubiquitous in modern society and are exemplified by power grids, transportation, and water distribution networks.
These systems, utilizing non-proprietary information and communication technologies, such as public Internet and wireless communication, are exposed to the threat of cyber attacks \cite{teixeira2015secure,falliere2011w32,kshetri2017hacking}, %which can cause 
with potentially severe financial and societal consequences.
For instance, an Iranian industrial control system and a Ukrainian power grid have witnessed the catastrophic consequences of malware such as Stuxnet in 2010 \cite{falliere2011w32} and Industroyer in 2016 \cite{kshetri2017hacking}, respectively.
Thus, in light of these alarming realities, the issue of security has acquired unprecedented significance in the realm of control systems. 
%%
%It has emerged as a crucial aspect of the paradigm, demanding immediate and decisive actions to ensure the confidentiality, integrity, and availability of these systems. 
%%
%This has led to the development of advanced techniques, aimed at fortifying the cyber-physical security of control systems, which will prove instrumental in safeguarding these critical infrastructures from cyber-attacks.
%%

In terms of cyber attacks on control systems, 
%an attack space has been introduced in \cite{teixeira2015secure}, allowing us to classify cyber-attacks into two main types, that is, availability attacks such as Denial-of-Service (DoS) attacks \cite{amin2009safe,gupta2010optimal,carl2006denial}, and deception attacks \cite{pang2011design,mo2009secure,park2019stealthy,zhang2020false,ren2021kullback,li2023secure} such as false data injection attacks. 
%%
%The issue of availability attacks has received scholarly attention in existing studies \cite{amin2009safe,gupta2010optimal} in relation to adversaries with limited resources but full knowledge of the system states. 
%%
%The researchers in \cite{carl2006denial} explored various scenarios where the adversary employs DoS techniques to disrupt communication channels, obstructing the flow of sensor measurement and actuator data and thereby rendering such information unavailable. 
%%
%On the other hand,
deception attacks that undermine the integrity of control systems have emerged as an area of increasing scholarly interest. 
For example, Pang and Liu \cite{pang2011design} have proposed an encryption-based predictive control mechanism to counteract and mitigate such attacks. 
Another form of deception attacks, replay attacks, has been unmasked by physical watermarking \cite{mo2009secure,naha2023quickest} and multiplicative watermarking \cite{ferrari2020switching}.
Meanwhile, the development of stealthy attacks on control systems has been made to evade the most advanced detection schemes \cite{park2019stealthy,zhang2020false,ren2021kullback,li2023secure}.

%%----------------
Upon review of the above %existing 
studies \cite{park2019stealthy,zhang2020false,ren2021kullback,li2023secure,mo2009secure,pang2011design,naha2023quickest,ferrari2020switching}, it is noticed that they have concentrated on secure estimation and secure control from the perspective of either the defender or the adversary. 
Nonetheless, it is crucial to note that both parties are confronted with similar challenges, as the defender has limited resources to counteract malicious activities, while the adversary also faces energy and detectability constraints when executing attacks. 
As a result, %there is significant value in addressing the security problem within a unified framework that encompasses both the defender and the adversary. 
addressing the security problem within a unified framework that encompasses both the defender and the adversary is of utmost importance.

Game theory offers a unified framework to consider the objectives and actions of both strategic players, namely the defender and the adversary \cite{zhu2015game}.
It also allows us to deal with the robustness and security of cyber-physical systems within the common well-defined framework of $\Hc_\infty$ robust control design \cite{wang2007lmi}.
Further, many other concepts of games describing networked systems subjected to cyber attacks such as matrix games \cite{nguyen2022single,umsonst2021bayesian,shukla2022robust}, dynamic games \cite{gupta2016dynamic}, stochastic games \cite{miao2018hybrid},
and network monitoring games \cite{milovsevic2023strategic} 
have been %recently 
studied.
Recent studies \cite{wu2020zero,nguyen2022single,nguyen2022zero} have utilized the common concept of zero-sum games to address the problem of input attacks on cyber-physical systems.
%%
% These papers investigated the security problem by letting the defender and the adversary select their actions at the same time.
% %%
% However, this formulation may not always be applicable in practical situations where an adversary probably moves after observing the action of the defender.
% %%
% To deal with this limitation, the game-theoretic Stackelberg framework \cite{bacsar1998dynamic} offers a more practical solution.
% %%
% In this framework, after analyzing possible attack scenarios, the defender, so-called \textit{the leader}, is allowed to select and announce their strategy regardless of the existence of the adversary \cite{li2018false}.
% %%
% Then, the malicious adversary, so-called \textit{the follower}, finds the best response to the defender's strategy.
Control systems exposed to cyber attacks have been extensively investigated through game theoretic approaches \cite{gupta2016dynamic,miao2018hybrid,milovsevic2023strategic}. 
However, these approaches have not accounted for the deployment of detectors in an effort to improve the detection of cyber attacks. 
This creates a significant knowledge gap that must be addressed in order to enhance security measures.

One such effort to close the aforementioned gap has been presented in a game-theoretic formulation outlined by Pirani et al. \cite{pirani2021game}. 
%The focus of the study in \cite{pirani2021game} centers around the defender's ability to select optimal sensor locations in a networked system, thus safeguarding against any adversary who seeks to disrupt the system. % to the maximum extent possible while remaining undetected.
%%
The game payoff in \cite{pirani2021game} has been formulated by combining the maximum $\mathcal{L}_2$ gains of multiple outputs with respect to a single input representing the attack signal.
On the one hand, these multiple $\mathcal{L}_2$ gains are evaluated separately and thus may be attained for different %optimal 
input signals. %, possibly resulting in pessimistic payoffs that cannot be attained by any admissible input signal. 
Further, the utilization of a maximum gain for characterizing the detectability corresponds to an optimistic perspective, where the adversary attempts to maximize the energy of the detection output, instead of the opposite.
Therefore, in order to address the critical issue of cyber security and develop a security metric against cyber attacks, it is imperative to thoroughly investigate the optimal placement of sensors in a networked system to minimize the impact of cyber attacks while maintaining maximum detectability.
%%%

Additionally, the above existing studies \cite{nguyen2022single,nguyen2022zero,pirani2021game,umsonst2021bayesian,gupta2016dynamic,milovsevic2023strategic} considered the security problem where the defender and the adversary select their actions simultaneously.
However, this formulation is not always applicable in practical situations where an adversary attacks after observing the action of the defender.
To deal with this scenario,
%limitation, 
a game-theoretic Stackelberg framework \cite{bacsar1998dynamic} offers a more practical solution \cite{shukla2022robust,li2018false,yuan2019stackelberg}.
In the framework, after analyzing possible attack scenarios, the defender called \textit{the leader}, has the power to select and announce their action first, knowing that the malicious adversary bases their actions on the leader's decision.
%is allowed to select and announce their action regardless of the existence of the adversary 
%%
Then, the malicious adversary called \textit{the follower}, finds the best response to the defender's action.

In this paper, we consider a continuous-time networked control system, associated with an undirected connected graph, under stealthy attacks involving two strategic agents: a malicious adversary and a defender.
The system is comprised of multiple interconnected one-dimensional subsystems, referred to as vertices.
%, in which a single performance vertex is selected to represent the local performance of the entire network. 
%%
%\tcr{The location of this local performance vertex is kept secret.}
%%
The purpose of the adversary is to maliciously degrade 
%the local performance 
{a distant target vertex}
without being detected. 
To 
%pursue 
this end,
%purpose, 
the adversary selects one vertex on which to launch a stealthy data injection attack on its input. 
%to attack and directly inject attack signals into its input. 
%%
Meanwhile, the defender allocates defense resources by selecting a set of monitor vertices
%, respecting the limited number of utilized sensors, and measures their outputs, 
to measure their outputs with the aim of alleviating the attack impact. 
%\textcolor{blue}{I am not sure the impact can be alleviated by monitoring.} Let me explain it a bit. This paper mainly focuses on stealthy data injection attacks. If the defender monitors more sensors, the adversary should limit the magnitude of attack signals to stay stealthy to the defender.
%detecting the presence of the adversary.
%%
Given the strategic nature of both agents, we investigate the optimal selection of the monitor vertices using the Stackelberg game-theoretic approach described above.
%%
%In this setup, we let the defender select and announce their action first.
%Then, the adversary, with full system knowledge, selects their action based on the defender's decision.
%%
By leveraging the concept of the Stackelberg game in \cite{bacsar1998dynamic}, we can elucidate the complex interplay between the two agents and identify their best actions. % for the two strategic agents.
Figure~\ref{fig:illustration} visualizes the above-defined game in a networked control system.
\begin{figure} [!t]
    \centering
    \includegraphics[width=0.5\textwidth]{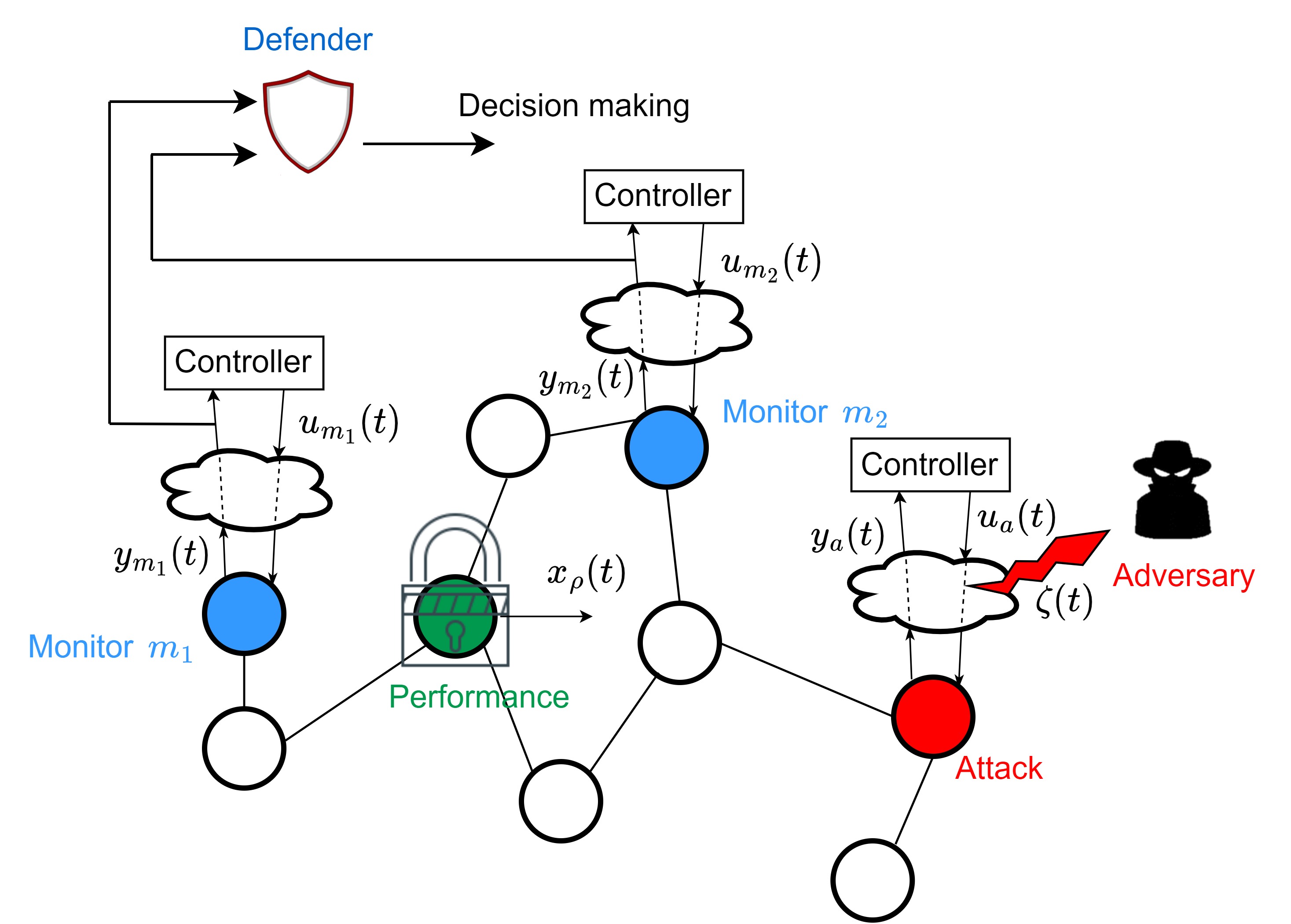}
    \caption{An illustration of a networked control system with the (green) {target} vertex. While the defender places sensors at the (blue) monitor vertices, the adversary conducts a stealthy data injection attack on the (red) attack vertex.}
    \label{fig:illustration}
\end{figure}
The contributions of this paper are the following:
\begin{enumerate}
    \item A novel {defense cost,}
    %objective function
    the expected output-to-output gain, is proposed to capture the expected worst-case impact of stealthy attacks with {an uncertain target vertex}.
    %performance vertex.
    \item The security allocation problem is cast in a Stackelberg game-theoretic framework with the defender as the leader and the malicious adversary as the follower.
    \item We propose a control design %for which we provide 
    that fulfills a graph-theoretic necessary and sufficient condition under which the boundedness of the {defense} cost is guaranteed.
    %and the expected worst-case impact of stealthy attacks.
    %%
    \item Leveraging the uncertainty of {the target vertex,}
    %the attack and {the performance vertex}, 
    we show that the necessary and sufficient condition in 3) restricts the admissible choice of monitor sets to be dominating sets of the graph.
    %where the defender acts as \textit{the leader} and the adversary acts as \textit{the follower} of the game. Then, we provide an algorithm to find the best actions for the defender and the adversary.
    %%
    \item The advantage of the proposed security allocation scheme is highlighted through the alleviation of the computational complexity.
    %the context of large-scale networks. %Finally, we present a numerical example to illustrate the obtained results. 
\end{enumerate}

The remainder of this paper is organized as follows.
Section \ref{sec:prob_for} describes a networked control system under stealthy attacks {and the adversarial modeling. Then, Section~\ref{sec:strategies} presents how a malicious adversary and a defender design their strategies.}
Thereafter, Section~\ref{sec:feasibility} investigates the boundedness of the {defense} cost and the worst-case impact of stealthy attacks caused by the malicious adversary. The investigation affords us to restrict the admissible actions of the defender, which is presented at the end of Section~\ref{sec:feasibility}. In Section~\ref{sec:game}, by employing the Stackelberg game-theoretic framework, we {propose two solutions to find} the optimal actions for the malicious adversary and the defender. In Section~\ref{sec:comp}, the effectiveness of the proposed security allocation scheme in terms of computational complexity is highlighted, especially in large-scale networks. 
Section~\ref{sec:num} presents a numerical example to validate the obtained results while 
Section~\ref{sec:concl} concludes the paper. We conclude this section by providing the notation to be used throughout this paper.
~\\~\\~
\textbf{Notation:} the set of real positive numbers is denoted as $\Rbb_+$ ; $\Rbb^n$ and $\Rbb^{n \times m}$ stand for sets of real $n$-dimensional vectors and $n$-row $m$-column matrices, respectively.
%%
%Let us define 
A vector with the $i$-th element set to $1$ and the rest of the elements to zero is denoted
$e_i \in \Rbb^n$.
%Consider 
For a set $\Ac$, $|\Ac|$ stands for the set cardinality. % of  $\Ac$.
{For a given discrete random variable $y \in D_y$ having a probability mass function $p$, the expected value of a function $f(y)$ is denoted as $\Ebb_{y \sim p} \left[ f(y) \right] = \sum_{y \in D_y} \, p(y) f(y)$}.
A continuous linear time-invariant (LTI) system with the state-space model $\dot{x}(t) = \bar Ax(t) + \bar Bu(t),\; y(t) = \bar Cx(t) + \bar Du(t)$ is denoted as $\bar \Sigma \triangleq (\bar A,\bar B,\bar C,\bar D)$.
The space of square-integrable  functions is defined as $\Lc_{2} \triangleq \bigl\{f: \Rbb_{+} \rightarrow \Rbb ~|~ \norm{f}^2_{\Lc_2 [0,\infty]} < \infty \bigr\}$ and the extended space be defined as $\Lc_{2e} \triangleq \bigl\{ f: \Rbb_{+} \rightarrow \Rbb ~|~ \norm{f}^2_{\Lc_2 [0,T]} < \infty,~ \forall~ 0 < T < \infty \bigr\} $.
%%
%Consider the norm $\norm{x}_{\Lc_2 [0,T]}^2 \triangleq \frac{1}{T}\int_{0}^{T} \norm{x(t)}_2^2~\text{d}t$, we simplify 
The notation $\norm{x}^2_{\Lc_2}$  is used  as shorthand for the norm $\norm{x}_{\Lc_2 [0,T]}^2 \triangleq \frac{1}{T}\int_{0}^{T} \norm{x(t)}_2^2~\text{d}t$ if the time horizon $[0,T]$ is clear from the context.
%%
% For a vector $x \in \Rbb^n$, $\norm{x}_0$ denotes the number of non-zero elements in the vector $x$.
%%
Let $\Gc \triangleq (\Vc, \Ec, A)$ be {an undirected} graph with the set of $N$ vertices $\Vc = \{1, 2,...,N\}$, the set of edges $\Ec \subseteq \Vc \times \Vc $, and the  adjacency matrix $A = [a_{ij}]$.
For any $(i,j) \in \Ec, ~i\neq j$, the element of the adjacency matrix $a_{ij}$ is positive, and with $(i,j) \notin \Ec$ or $i = j$, $a_{ij} = 0$. 
The degree of vertex $i$ is denoted as 
$\Delta_i =  \sum_{j=1}^{n} a_{ij}$ and the degree matrix of graph $\Gc$ is defined as 
$\Delta = {\bf diag}\big(\Delta_1, \Delta_2,\dots, \Delta_N\big)$, where ${\bf diag}$ stands for a diagonal matrix. 
%\textcolor{blue}{Why do you need boldface here?} I tried to convert it back to regular text, but the notation diag is mixed with regular text. I want to have a distinction between notation and regular text
%%
The Laplacian matrix is defined as $L = [\ell_{ij}] = \Delta - A$.
Further, $\Gc$ is called an undirected connected graph if and only if matrix $A$ is symmetric and the algebraic multiplicity of zero as an eigenvalue of $L$ is one.
The set of all neighbours of vertex $i$ is denoted as $\Nc_i = \{j \in \Vc~|~ (i,j) \in \Ec \}$.
We denote the subset of $\Vc$ excluding a vertex $i$ as $\Vc_{-i} \triangleq \Vc \setminus \{i\}$.
%%------------------------------
%%
\section{Problem Description}
\label{sec:prob_for}
We first describe a networked control system under stealthy attacks in the presence of a defender and a malicious adversary. 
{Then, the malicious goal and the attack strategy are modeled in the remainder of this section.}
%The malicious adversary conducts a stealthy data injection attack on the input of a vertex with the purpose of degrading {a target vertex.} 
%the local performance of the system. 
%Meanwhile, the defender desires to alleviate the attack impact on the {target} through placing sensors at several vertices {without exactly knowing the malicious target}. %\textcolor{blue}{Sensors can help to improve detection but cannot alleviate impact.} 
%In the remainder of this section, we analyze the worst-case impact of stealthy attacks on the system based on the {output-to-output} gain security metric, which is utilized to formulate the objectives of the adversary and the defender.
%%
\subsection{Networked control system under stealthy attacks}
Consider an undirected connected graph $\Gc \triangleq (\Vc, \Ec, A)$ with $N$ vertices, the state-space model of a one-dimensional vertex $i$ is described:
	\begin{align}
	\dot x_i(t) &= u_i(t), ~~ i \in \Vc = \bigl\{1,~2,\ldots,~N\bigr\},
	\label{sys:xi1}
	\end{align}
	%%
 %\AM{Motivation for the choice of model \eqref{sys:xi1}?}
where $x_i(t) \in \Rbb$ {and $u_i(t) \in \Rbb$ are the state and the local control input of vertex $i$, respectively}. 
Each vertex $i \in \Vc$ is controlled by the control law:
	\begin{align}
	u_i(t) = -\theta_i x_i(t) +  \sum_{j \in \Nc_i} \big(x_j(t) - x_i(t)\big),
	\label{sys:u}
	\end{align}
where $\theta_i \in \Rbb_+$ is an adjustable self-loop control gain at vertex $i$.
This self-loop control gain will be used to improve the security of the entire network later in this paper.
For convenience, let us denote $x(t)$ as the state of the entire network, $x(t) = \big[x_1(t),~x_2(t),\ldots,~x_N(t)\big]^\top$.

To get prepared for facing malicious activities,
the defender selects a subset of the vertex set $\Vc$ as a set of monitor vertices, denoted as $\Mc = \{m_1,m_2,\ldots,m_{| \Mc |} \}$, on which to place a sensor at each selected monitor vertex. 
Due to practical reasons, the number of utilized sensors should be constrained.
Let us denote $n_s$ as the sensor budget that is the maximum number of utilized sensors, i.e., $|\Mc| \leq n_s$.
% {We assume that the group of performance vertices is not revealed to the defender since any messages communicated between the operator and the defender are probably disclosed to the malicious adversary.}

{
Given the defense strategy,} 
the malicious adversary selects a vertex $a \in \Vc$ on which to conduct an additive time-dependent attack signal $\zeta(t) \in \Rbb$, where $\zeta \in \Lc_{2e}$, at its input as follows:
\begin{align}
	u_a(t) = -\theta_a x_a(t) + \sum_{j \in \Nc_a} \big(x_j(t) - x_a(t)\big) + \zeta(t).
	\label{sys:ua}
	\end{align}

{Based on the above descriptions of the network, the defense strategy, and the malicious plan,}
the system model \eqref{sys:xi1} under the control law \eqref{sys:u}-\eqref{sys:ua} can be rewritten in the presence of the attack signal at the {attack vertex $a$ with output of the target vertex $\rho$}
%$vertex a \in \Vc_{-\rho}$  with  outputs of the performance vertex $\rho$ 
and outputs observed at the monitor vertices $m_k \in \Mc$ as follows:
	\begin{align}
	\dot x(t) &= - \bar L x(t) + e_a \zeta(t),
	\label{sys:x}
	\\
	y_\rho (t) &= e_\rho^\top x(t),
	\label{sys:yt}
	\\
	y_{\Mc} (t) &= C_\Mc^\top x(t),
	\label{sys:ym}
	\end{align}
	where
	$C_\Mc = [e_{m_1},e_{m_2},\ldots,e_{m_{|\Mc|}}]$, $\bar L = L + \Theta$, and $\Theta = \textbf{diag}(\theta_1,\theta_2,\ldots,\theta_N)$. 
The Laplacian matrix $L$ is associated with the undirected connected graph $\Gc$ and $\theta_i \in \Rbb_+,~\forall i \in \Vc$, resulting in that all the eigenvalues of the matrix $\bar L$ are positive real.
This property of $\bar L$ ensures that the state of the network $x(t)$ asymptotically converges to the origin in the attack-free case, affording us to employ the following assumption.
\begin{Assumption}
    The system \eqref{sys:x} is at its equilibrium $x_e = 0$ before being affected by the attack signal $\zeta(t)$.  \QET
\end{Assumption}
{
\begin{Remark}
    The system \eqref{sys:x}-\eqref{sys:ym} is guaranteed to be asymptotically stable in the attack-free case. 
    %based on the eigenvalues of matrix $\bar L$. 
    Unfortunately, the stability of an attack-free system is not enough to determine the 
    impact of stealthy additive false data injection attacks \eqref{sys:ua}, which are mainly studied in this paper. The attack impact needs to be evaluated through the invariant zeros of the system \eqref{sys:x}-\eqref{sys:ym}, which will be described in Section~\ref{sec:feasibility}. 
    \QET
\end{Remark}
}

\subsection{Stealthy attack model}
The purpose of the malicious adversary is to maximally disrupt {a distant target vertex (denoted as $\rho$) by compromising an attack vertex $a$,} while remaining stealthy to the defender {(see the discussion on the importance of the stealthiness in \cite[Sec. II.E]{umsonst2021bayesian}). This attack strategy is motivated by existing scenarios considered in the literature such as a single target vertex in network control systems \cite{pirani2021game}, malicious control in competitive power systems \cite{demarco1996potential}, 
Crossfire attacks in computer security \cite{kang2013crossfire}, and adversarial reachable sets \cite{pirani2021strategic}, where the malicious goal is to impact other vertices beyond the initially compromised vertex. Based on these motivating examples, we employ the following assumption.
\begin{Assumption}
	\label{assumption:attack_not_performance}
     {For any given attack vertex $a$,
	  the target vertex $\rho$} is distinct from the attack vertex $a$, i.e., $\rho \in \Vc_{-a}$.
  %{For any given target vertex $\rho$,
	 % the attack vertex $a$ is distinct and distant from the target vertex $\rho$, i.e., $\rho \in \Vc_{-a}$.}
	%\QET
\end{Assumption}
}

%In the scope of this study, we 
{The above malicious purpose allows us to}
mainly focus on the stealthy data injection attack that will be defined in the following.
Consider the above structure of the continuous LTI system \eqref{sys:x}-\eqref{sys:ym}, which we denote as $\Sigma_{\rho \Mc} \triangleq (-\bar L,e_a,[e_\rho,C_\Mc]^\top,0)$, with %the performance output $y_\rho (t) = e_\rho^\top x (t)$ and 
the monitor outputs $y_{m_k} (t) = e_{m_k}^\top x(t),~\forall m_k \in \Mc$.
The input signal $\zeta(t)$ of the system $\Sigma_{\rho \Mc}$ is called the stealthy data injection attack if the monitor outputs satisfy  $\norm{y_{m_k}}_{\Lc_{2}}^2 < \delta_{m_k}$, 
%where $y_{m_k}(t) = e_{m_k}^\top x(t)$, 
for all  $m_k \in \Mc$, in which $\delta_{m_k} > 0$ is given for each corresponding monitor vertex $m_k$ and called an alarm threshold.
This means that the adversary is said to be detected if there exists at least one monitor vertex $m_k \in \Mc$ whose output energy crosses its corresponding alarm threshold $\delta_{m_k}$.
%%
% {Due to the fact that the operator evaluates the local performance of the network through the performance vertex $\rho$, 
% {From the operator's perspective, the operator desires to stabilize the networks or to maintain the network state at its equilibrium. Consequently, the malicious purpose, which is to maximally disrupt the target vertex $\rho$, can be translated into maximizing the output energy of the target vertex $\rho$.}
The impact of the stealthy data injection attack is measured via the output energy of {the target vertex $\rho$} over the horizon $[0, T]$, i.e., $\norm{y_\rho}_{\Lc_{2}[0, T]}^2$. 
{This performance specification is commonly used in the literature on secure control systems \cite{mo2009secure,ren2021kullback,zhu2015game,wu2020zero,pirani2021game} where it captures the average impact on the target in a certain time interval. 
%Standing on the same line with \cite{umsonst2021bayesian}, the adversary should aim for 
The long time horizon considered in the attack impact is motivated by a common assumption in the literature that adversaries aim for a long-term malicious impact after they have made a significant investment to infiltrate the network and acquire system parameters \cite{umsonst2021bayesian}.
Further, akin to the $\Hc_-/\Hc_\infty$ metrics \cite{wang2007lmi} and the LQ controller design, designing problems with energy costs and linear systems can be formulated into tractable problems.}
% The time horizon $[0, T]$ is chosen to be sufficiently large, possibly infinite. 
%\textcolor{blue}{Say something about the choice of $T$.}
% \AM{Note that
% in most industrial systems, alarms are set on the amplitude
% of a monitored signal. Why use the same threshold for different outputs?} 
% \TN{the alarm thresholds have been reformulated to be different corresponding to different outputs}

The worst-case impact of the stealthy data injection attack conducted by the malicious adversary on {the target}
%the local performance 
will be further investigated. Then, this worst-case attack impact will be utilized to formulate the objectives of the adversary and the defender in the following {section}.
%%
%%
%%----------------------
\section{Attack and Defense Strategies}
%%----------------------
\label{sec:strategies}
%Since the performance vertex $\rho$ is unknown, 
%to the adversary, 
{In the first two parts of this section, the malicious and defense objectives are formulated to design the strategies for the malicious adversary and the defender. In the remainder of this section, the security allocation methodology that is the main focus of this paper is presented to show how the defender designs their defense strategy.}
% {The unknown target of the malicious adversary enforces the defender to}
% %we will 
% investigate the attack impact for all the possible {malicious targets}
% %locations of the performance vertex 
% in this subsection. We start by considering a fixed 
% %performance vertex 
% {target vertex that is in line with our previous work \cite{nguyen2022single}}.
%%----------------------
%\subsubsection{Fixed performance vertex $\rho$}
\subsection{Attack strategy}
%%
%According to \textit{Assumption~\ref{assumption:attack_not_performance}}, g
% Given a {target}
% %performance 
% vertex $\rho$,
% % {with its output:
% % \begin{align}
% %     y_\rho (t) &= e_\rho^\top x(t),  \label{sys:yt}
% % \end{align}
% % }
% the adversary selects an attack vertex %$a \in \Vc$
% $a \in \Vc_{-\rho}$ 
% while the defender selects a set of monitor vertices $\Mc~(|\Mc| \leq n_s)$. The worst-case impact of stealthy attacks on the fixed {target} vertex $\rho$ is formulated as follows:
{Practically, the malicious adversary designs their attack policies after acquiring enough system parameters and observing defense strategies. Given the set of monitor vertices $\Mc$, the malicious adversary selects an attack vertex $a$ that maximizes the following worst-case impact of stealthy attacks on the distant target vertex $\rho \in \Vc_{-a}$:}
%%
%~\\
% \AM{For given $T$? Do
% the attacker and defender use the same norm?}
% \TN{Good concern!!!}
	\begin{align}
	J_\rho(a,\Mc)  \triangleq \sup_{x(0) \, = \, 0,~\zeta \, \in \, \Lc_{2e}} &\norm{y_\rho}_{\Lc_{2}}^2 \label{Jtau1} \\
	\text{s.t.}~~~~~~
	&\norm{y_{m_k}}_{\Lc_{2}}^2 \leq \delta_{m_k},~\forall m_k \in \Mc,
	\non \\ 
    & {\eqref{sys:x}-\eqref{sys:ym}.} \non
	\end{align}
	%%
%  \AM{Why keep the square in the inequality? What is the
% reason for zero initial conditions? Can this limitation be
% relaxed?}
% \TN{Assumption 2 added. The zero initial states and the square allow us to solve the OOG numerically through LMIs more easily}
	%The constraints in \eqref{Jtau1} implies that the adversary is said to be detected if there exists at least one monitor vertex whose output energy crosses its corresponding alarm threshold $\delta_{m_k}$. 
	%%
The dual problem of \eqref{Jtau1} is given as follows:
	\begin{align}
	% \inf_{\gamma_{m_k} > 0}& ~ \Bigg[  \sup_{x(0)=0,~\zeta \in \Lc_{2e}} ~ 
	% \Big( \norm{y_\rho}^2_{\Lc_{2e}} - \sum_{m_k \in \Mc} \gamma_{m_k} \big( \norm{y_{m_k}}_{\Lc_{2e}}^2 - \delta_{m_k} \big) \Big) \Bigg],
	% \\
	\inf_{\gamma_{m_k} > 0}& ~ \Bigg[  \sup_{x(0) \,= \,0,~\zeta \, \in \, \Lc_{2e}} ~ 
	\Big( \norm{y_\rho}^2_{\Lc_{2}} - \sum_{m_k \in \Mc} \gamma_{m_k}  \norm{y_{m_k}}_{\Lc_{2}}^2    \Big) \non \\ 
	&\hspace{4cm}
	+ \sum_{m_k \in \Mc} \gamma_{m_k} \delta_{m_k} \Bigg]
	\label{dual_form} \\
    { \text{s.t.}}~~&~~~  
    {\eqref{sys:x}-\eqref{sys:ym}.} \non
	\end{align}
	The dual problem \eqref{dual_form} is bounded only if $\norm{y_\rho}^2_{\Lc_{2}} - \sum_{m_k \in \Mc} \gamma_{m_k}  \norm{y_{m_k}}_{\Lc_{2}}^2 \leq 0,~\forall \zeta \in \Lc_{2e}$ and $x(0) = 0$, which results in the following minimization problem:
	\begin{align}
	J_\rho(a,\Mc) = \min_{\gamma_{m_k} > 0}& ~ \sum_{m_k \in \Mc} \gamma_{m_k}  \delta_{m_k}, \label{security_metric} \\
	\text{s.t.}~~& \norm{y_\rho}^2_{\Lc_{2}} - \sum_{m_k \in \Mc} \gamma_{m_k} \norm{y_{m_k}}_{\Lc_{2}}^2 \leq 0, \non \\
	&~~~~
	{\eqref{sys:x}-\eqref{sys:ym}},~ 
    x(0) = 0,~\forall \zeta \in \Lc_{2e}. \non
	\end{align}
The strong duality can be proven by utilizing S-Procedure \cite[Ch. 4]{petersen2000robust}. Recalling the key results in dissipative system theory for linear systems with quadratic supply rates \cite{trentelman1991dissipation}, the optimization problem {\eqref{security_metric} can be translated into the following semidefinite programming (SDP) problem \cite[Prop. 1]{teixeira2021security}:}
%the constraint in \eqref{security_metric} can be translated into a linear matrix inequality \cite[Prop. 1]{teixeira2021security} as follows:
%%
\begin{align}
	J_\rho(a,\Mc) =& \min_{\gamma_{m_k} > 0,~P=P^\top \geq 0}~~  \sum_{m_k \in \Mc} \gamma_{m_k}  \delta_{m_k} \label{opt_LM_security_metric} \\
	&\text{s.t.}~ \ba{cc}
	-\bar L P - P  \bar L & P  e_a \\
	e_a^\top P & 0
	\ea  
	+ \ba{c}
	e_\rho \\ 0
	\ea \ba{cc}
	e_\rho^\top & 0
	\ea  \non \\ 
	&  ~~~ 
	- \sum_{m_k \in \Mc} \gamma_{m_k} \ba{c}
	e_{m_k} \\ 0
	\ea 
	\ba{cc}
	e_{m_k}^\top & 0
	\ea \leq 0. \non
	\end{align}
	%\TN{With numerical simulation, most of pairs of $a$ and $\Mc$ yield $\sum_{m_k \in \Mc} \gamma_k \delta_{m_k} = 1$. Can we explain it? It might give us something related to equilibrium.}

{It is worth noting that} to guarantee the existence of a solution to the optimization problem \eqref{opt_LM_security_metric}, we need to show the {boundedness} of the optimization problem \eqref{Jtau1} \cite{teixeira2015strategic}, which will be discussed in Section~\ref{sec:feasibility}.
{The following subsection presents how the defender designs their defense strategy without knowing the exact target of the malicious adversary.}
% {The boundedness} will be studied in Section~\ref{sec:feasibility} after we formulate {objectives of the defender and the malicious adversary}
% %the expected worst-case impact of stealthy attacks 
% in the case of a probabilistic performance vertex in the following subsection. 
%%
%%
\begin{Remark}
    In a similar scenario, another objective function based on $\Lc_2$-gain for both the adversary and the defender has been proposed in \cite[Sec. 3]{pirani2021game}.
	The objective function in \cite[Sec. 3]{pirani2021game} was formulated in terms of the maximal $\Lc_2$-gains from the attack vertex $a$ to the {target} vertex $\rho$ and from the attack vertex $a$ to the monitor vertex $m_k$. More specifically, the objective function in \cite[Sec. 3]{pirani2021game} is given by
	$$W_\rho(a, m_k) = \sup_{ \|\zeta\|_{\Lc_2} \neq 0}
	\frac{\|y_\rho\|_{\Lc_2}^2}{\|\zeta\|_{\Lc_2}^2}
	 - \lambda \sup_{ \|\zeta\|_{\Lc_2} \neq 0}
	\frac{\|y_{m_k}\|_{\Lc_2}^2}{\|\zeta\|_{\Lc_2}^2},~ (\lambda \geq 0). $$
	The above objective $W_\rho(a, m_k)$ also considers two different outputs $y_\rho(t)$ and $y_{m_k}(t)$, but note that the output energies are maximized separately, thus leading to two different optimal input signals $\zeta(t)$ in general cases.
	%%
	%Thus, these two worst-case outputs rarely occur with the same input signal.
	%%
	By contrast, our objective function \eqref{Jtau1} considers the worst-case impact of stealthy attacks that is simultaneously characterized by the multiple outputs $y_\rho(t)$ and $y_{m_k}(t)$ with respect to a single input signal $\zeta(t)$. \QET
\end{Remark}
%%-------------------------------
\subsection{Defense strategy}
%%-------------------------------
{
The malicious adversary aims at maximizing \eqref{Jtau1} with respect to a distant target vertex $\rho$, which is uncertain to the defender. This assumption closely aligns with practical situations where the defender seldom foresees the exact intentions of malicious adversaries. To design a suitable defense strategy despite such uncertainty, the defender can conduct a risk assessment \cite{ross2012guide} to assess and reason about the impact and the likelihood of potential malicious activities (namely pairs of attack and target vertices in our context).
To this end, the defender considers the malicious target, represented by the location of the vertex $\rho$, in a probabilistic manner. 
%for a given attack vertex $a$.
For a given attack vertex $a$, the uncertain target vertex $\rho$ is characterized probabilistically through a conditional belief $\pi_a(\rho)$, which is assumed to be positive $\forall \rho \in \Vc_{-a}$. This belief model aligns partially with the concept of attack types in games with incomplete information \cite{umsonst2021bayesian,harsanyi1967games}.
Therefore, instead of minimizing \eqref{Jtau1} as in games with complete information, the defender utilizes the above-defined conditional belief $\pi_a(\rho)$ to consider an expected worst-case impact of stealthy attacks as a proxy for \eqref{Jtau1}. Then, the defender desires to choose a set of monitor vertices $\Mc$ that minimizes the following defense cost:
}
%When the defender selects the set of monitor vertices $\Mc$ and the adversary selects attack vertex $a$,  the defender desires to minimize the following {defense} cost:
\begin{align}
R(a,\Mc) \triangleq \cfr(|\Mc|) + Q(a,\Mc), 
%\sum_{\rho \in \Vc_{-a} } {\pi(\rho|a)} J_\rho (a,\Mc), 
\label{game_payoff_def} 
\end{align}
{
where the expected worst-case impact of stealthy attacks is defined as:
\begin{align}
    Q(a,\Mc) &\triangleq \Ebb_{\rho\sim\pi_a} \left[ J_\rho (a,\Mc) \right] \non \\
    & = \sum_{\rho \in \Vc_{-a} } {\pi_a(\rho)} J_\rho (a,\Mc),
    \label{game_payoff_Q} 
\end{align}
}
and 
$\cfr(|\Mc|)$ is a cost for the number of utilized sensors, {which is assumed to be bounded for any monitor set $\Mc \subseteq \Vc$}. 
%{and $\pi$ is ...}.
%%
%The sensor-to-cost function $\cfr(|\Mc|)$ has the following properties: 1) it significantly increases as the number of utilized sensors increases, %\textcolor{blue}{just increasing in $n_s$?}  and 2) it is bounded for any monitor set $\Mc \subseteq \Vc$.
{
%The defender believes that the malicious adversary maximizes the expected worst-case impact of stealthy attacks \eqref{game_payoff_Q}.
%Based on the Stackelberg game-theoretic framework \cite{bacsar1998dynamic}, which will be adopted later, the defender aims at minimizing the defense cost with knowing that the malicious adversary maximizes the worst-case impact of stealthy attacks \eqref{Jtau1}. 
% It is worth noting that the uncertain target vertex $\rho$ deters the defender from exactly minimizing the worst-case impact of stealthy attacks \eqref{Jtau1}. 
% To deal with such an issue, the defender minimizes \eqref{game_payoff_def} by employing the expected worst-case impact of stealthy attacks \eqref{game_payoff_Q} as a proxy of \eqref{Jtau1}.
%%
In the following subsection, we present how the defender plans their defense strategy by addressing the above-defined objectives \eqref{game_payoff_def}-\eqref{game_payoff_Q}.}
\subsection{Security Allocation Methodology}
{
The security allocation entails that the defender strategically allocates defense resources to monitor specific vertices, aiming to enhance the security level of the network. 
The strategic selection of a monitor set is computed offline in the design phase, where the defender simulates and evaluates all the possible attack scenarios in order to seek the best monitor set that minimizes the defense cost \eqref{game_payoff_def}.
}

{
To accomplish the design, the defender first evaluates the defense cost \eqref{game_payoff_def} based on the potential target of the malicious adversary, which is generally uncertain, in a probabilistic way (see more discussions in \cite[Sec. II.E]{umsonst2021bayesian}). 
Secondly, the defender changes the attack scenario to other vertices and repeats the investigation conducted in the previous step for all potential attack vertices. In the end, the defender obtains the result of the enumeration of all the action scenarios, which is in line with other Stackelberg security games found in the literature \cite{shukla2022robust}. Finally, the result from the enumeration enables the defender to find the best monitor set, which will be discussed in Section~\ref{sec:game}.
}

{
In the steps mentioned above, the defender can neglect monitor sets that yield unbounded defense costs, reducing the defender's action space, saving computing resources, and fostering the design procedure. 
From \eqref{Jtau1}, $J_\rho(a,\Mc)$ is non-negative for every pair of attack vertex $a$ and monitor set $\Mc$. 
Thus, the {defense} cost $R(a,\Mc)$ 
and the expected worst-case impact of stealthy attacks $Q (a,\Mc)$ are 
bounded when the worst-case impact of stealthy attacks \eqref{Jtau1} on every {target}  vertex $\rho \in \Vc_{-a}$ is bounded.
In the following section, we will present how the defender finds a set of admissible monitor vertices $\Mc$ that guarantees the boundedness of the worst-case impact of stealthy attacks \eqref{Jtau1} for every attack vertex. 
}
%%----------------------
\section{Characterizing the set of monitor vertices}
\label{sec:feasibility}
In this section, we first provide an upper bound of the worst-case impact of stealthy attacks \eqref{Jtau1}. The {boundedness} of this upper bound is guaranteed by a necessary and sufficient condition.
By analyzing this upper bound, we provide a graph-theoretic necessary and sufficient condition under which the cost \eqref{game_payoff_def} and the expected worst-case impact \eqref{game_payoff_Q} are bounded. This condition, then, allows us to limit the admissible actions of the defender. In the remainder of this section, we show how the admissible actions of the defender are characterized. 
%%----------------------
\subsection{Evaluating the worst-case impact of stealthy attacks}
	The following lemma states a key property of the worst-case impact of stealthy attacks \eqref{Jtau1}.
	\begin{Lemma} \label{lem:J_upbound}
		Consider the continuous LTI system $\Sigma_\Mc = (-\bar L,e_a,C_\Mc^\top,0)$ with a given {attack vertex $a$, a target vertex $\rho \in \Vc_{-a}$}, and a non-empty monitor vertex set $\Mc$, the worst-case impact \eqref{Jtau1} has an upper bound: %\textcolor{blue}{Underscore is usually reserved for lower bounds}
		\begin{align}
		{J}_\rho(a,\Mc) ~\leq~ \overline{J}_\rho(a,\Mc),
		%\leq \overline{J}_\rho(a,\Mc),
		\label{lem:Jtau1_upbound}
		\end{align}
		where 
		\begin{align}
		\overline{J}_\rho(a,\Mc) &= \min_{m_k \in \Mc} \left\{\begin{aligned} 
		\sup_{x(0)=0,~\zeta \in \Lc_{2e}} &~~ \norm{y_\rho}^2_{\Lc_{2}} \\
		\text{s.t.}~~~~~&~~ 
		\norm{y_{m_k}}^2_{\Lc_{2}} \leq \delta_{m_k}
		\end{aligned}\right\}.
		\label{lem:Jtau1_upbound_def}
		%\\ 
		% \overline{J}_\rho(a,\Mc) &= \max_{m_k \in \Mc} \left\{\begin{aligned} 
		%     \sup_{x(0)=0,~\zeta \in \Lc_{2e}} &~~ \norm{y_\rho}^2_{\Lc_{2e}} \\
		%     \text{s.t.}~~~~~&~~ 
		%     \norm{y_{m_k}}^2_{\Lc_{2e}} \leq \delta_{m_k}
		% \end{aligned}\right\}. 
		\QET
		\end{align}
	\end{Lemma}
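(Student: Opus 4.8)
The plan is to prove the bound by a constraint-relaxation argument: each single-constraint problem appearing in the definition of $\overline{J}_\rho(a,\Mc)$ is a relaxation of the multi-constraint problem \eqref{Jtau1}, and the supremum of a fixed objective over a larger feasible set can only be larger. First I would fix the attack vertex $a$ and the target $\rho \in \Vc_{-a}$ and introduce notation for the competing feasible sets. Let
\[
\Fc \triangleq \bigl\{ \zeta \in \Lc_{2e} : x(0) = 0,~ \norm{y_{m_k}}^2_{\Lc_{2}} \leq \delta_{m_k}~\text{for all}~ m_k \in \Mc \bigr\}
\]
denote the feasible set of \eqref{Jtau1}, and for each fixed monitor vertex $m_k \in \Mc$ let
\[
\Fc_{m_k} \triangleq \bigl\{ \zeta \in \Lc_{2e} : x(0) = 0,~ \norm{y_{m_k}}^2_{\Lc_{2}} \leq \delta_{m_k} \bigr\}
\]
be the set obtained by retaining only the single stealthiness constraint at $m_k$, where in both cases $y_\rho$ and $y_{m_k}$ are generated by the dynamics \eqref{sys:x}-\eqref{sys:ym}. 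Since every $\delta_{m_k} > 0$, the trivial input $\zeta \equiv 0$ belongs to each of these sets, so they are nonempty and the associated suprema are well defined.

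Next I would observe that dropping constraints enlarges the feasible set: because $\Fc$ imposes the stealthiness requirement at \emph{every} monitor vertex while $\Fc_{m_k}$ imposes it at $m_k$ alone, we have $\Fc \subseteq \Fc_{m_k}$ for each $m_k \in \Mc$. Taking the supremum of the common objective $\norm{y_\rho}^2_{\Lc_{2}}$ over these nested sets and using monotonicity of the supremum under set inclusion yields
\[
J_\rho(a,\Mc) = \sup_{\zeta \in \Fc} \norm{y_\rho}^2_{\Lc_{2}} ~\leq~ \sup_{\zeta \in \Fc_{m_k}} \norm{y_\rho}^2_{\Lc_{2}}
\]
for every individual $m_k \in \Mc$.

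Finally, since the displayed inequality holds for each $m_k \in \Mc$ separately, the single value $J_\rho(a,\Mc)$ is simultaneously a lower bound for all of the single-constraint suprema, hence it does not exceed their minimum over $m_k \in \Mc$. Identifying the right-hand side with the definition \eqref{lem:Jtau1_upbound_def} of $\overline{J}_\rho(a,\Mc)$ then establishes \eqref{lem:Jtau1_upbound}. I do not expect a genuine obstacle here: the whole content is the monotonicity of the supremum under relaxation of the constraints, and the inequality remains valid in the extended reals, so no boundedness of either side need be assumed at this stage (the boundedness question is deferred to the subsequent characterization). The only point deserving a moment's care is confirming nonemptiness of the feasible sets, which is guaranteed by $\delta_{m_k} > 0$, so that each supremum is meaningful.
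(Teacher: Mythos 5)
Your proposal is correct and follows essentially the same argument as the paper's own proof: dropping all but one stealthiness constraint enlarges the feasible set while keeping the objective fixed, so $J_\rho(a,\Mc) \leq J_\rho(a,m_k)$ for each $m_k \in \Mc$, and taking the minimum over $m_k$ gives \eqref{lem:Jtau1_upbound}. Your additional remarks on nonemptiness of the feasible sets and validity in the extended reals are careful touches the paper leaves implicit, but the core reasoning is identical.
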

	%%
	%\TN{The above inequality \tcr{$\geq$} was verified by some numerical examples}
	%%
\begin{proof}
    See Appendix~\ref{app:J_upbound}. 
    %\textcolor{blue}{Maybe just see Appendix?}
\end{proof} 
% \begin{proof}
% Showing \eqref{lem:Jtau1_upbound} is trivial when the monitor vertex set $\Mc$ has only one vertex. We assume that $\Mc$ has more than one monitor vertex. 
% %%
% From the worst-case impact of stealthy attacks \eqref{Jtau1}, let us introduce the following optimization by removing $|\Mc|-1$ constraints except the constraint corresponding to a monitor vertex $m_k \in \Mc$ as follows:
% 		\begin{align}
% 			J_\rho(a,m_k) = \sup_{x(0)=0,~\zeta \in \Lc_{2e}} &~~ \norm{y_\rho}^2_{\Lc_{2}} \label{pf:Jtau_mk} \\
% 			\text{s.t.}~~~~~&~~ 
% 			\norm{y_{m_k}}^2_{\Lc_{2}} \leq \delta_{m_k}. \non 
% 		\end{align}
% 		The design of the optimization problem \eqref{pf:Jtau_mk} tells us that its feasible set contains the feasible set of the optimization problem \eqref{Jtau1}. Further, the two optimization problems \eqref{Jtau1} and \eqref{pf:Jtau_mk} have the same objective function. This implies that $J_\rho(a,\Mc) \leq J_\rho(a,m_k)$ for all $m_k \in \Mc$, directly resulting in \eqref{lem:Jtau1_upbound}.
% 	\end{proof}

	\textit{Lemma~\ref{lem:J_upbound}} enables us to guarantee the boundedness of the worst-case impact of stealthy attacks \eqref{Jtau1} through considering the isolated worst-case impact of stealthy attacks \eqref{lem:Jtau1_upbound_def} at a single monitor vertex $m_k \in \Mc$. 
    Next, at the first stage in the investigation of the boundedness of the worst-case impact of stealthy attacks \eqref{lem:Jtau1_upbound_def}, we adopt a result in \cite{teixeira2015strategic}.
	%%
	%Let us denote the following continuous LTI systems 
	%$\Sigma_\rho \triangleq (-\bar L,e_a,\allowbreak e^\top_\rho,0)$ and $\Sigma_{m_k} \triangleq (-\bar L,e_a,e_{m_k}^\top,0)$. 
	Inspired by \cite[Th. 2]{teixeira2015strategic}, the {boundedness} of the optimization problem \eqref{lem:Jtau1_upbound_def} is related to the invariant zeros of $\Sigma_\rho \triangleq (-\bar L,e_a,e^\top_\rho,0)$ and $\Sigma_{m_k} \triangleq (-\bar L,e_a,e_{m_k}^\top,0)$, which are defined as follows.
	%%
	%\TN{Should we generalize the results for SIMO systems}
	\begin{Definition}[Invariant zeros]  \label{def:invariant_zero}
		%\citep{emami1982computation} 
		Consider the strictly proper LTI system $\bar \Sigma \triangleq (\bar A,\bar B,\bar C,0)$ where $\bar A,\bar B$, and $\bar C$ are real matrices with appropriate dimensions. A tuple $(\bar \lambda,\bar{x},\bar g) \in \Cbb \times \Cbb^N \times \Cbb$ is a zero dynamics of $\bar \Sigma$ if it satisfies
		\begin{align}
		\ba{cc}
		\bar \lambda I - \bar A ~~~~ & -\bar B \\
		\bar C & 0
		\ea
		\ba{c}
		\bar{x} \\ \bar g
		\ea
		=
		\ba{c}
		0 \\ 0
		\ea,
		~~~ \bar{x} \neq 0.
		\label{def:inv_zero}
		\end{align}
		A finite $\bar \lambda$ is called a finite invariant zero of the system $\bar \Sigma$.
		The strictly proper system $\bar \Sigma$ always has at least one invariant zero at infinity \cite[Ch. 3]{franklin2002feedback}. Further, invariant zeros that have positive real parts are called unstable invariant zeros.
		\QET
	\end{Definition}

More specifically, to guarantee the boundedness of the worst-case impact of stealthy attacks \eqref{lem:Jtau1_upbound_def}, let us state the following lemma.
\begin{Lemma}[{\cite[Th. 2]{teixeira2015strategic}}]
\label{lem:inv_zero}
	Consider the following continuous LTI systems $\Sigma_\rho \triangleq (-\bar L,e_a, e^\top_\rho,0)$ and $\Sigma_{m_k} \triangleq (-\bar L,e_a,e_{m_k}^\top,{0}),~\forall m_k \in \Mc$.  The optimization problem \eqref{lem:Jtau1_upbound_def} is {bounded} if, and only if, there exists at least one system $\Sigma_{m_k}$ such that its unstable invariant zeros are also invariant zeros of $\Sigma_\rho$.
 %\textcolor{blue}{are unstable zeros introduced?}
    \QET
\end{Lemma}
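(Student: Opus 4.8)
The plan is to separate the minimization over the monitor set from the single-output gain analysis, and then apply the invariant-zero criterion to each isolated problem. Writing \eqref{lem:Jtau1_upbound_def} as $\overline{J}_\rho(a,\Mc) = \min_{m_k\in\Mc} g(m_k)$, where $g(m_k)\in[0,+\infty]$ is the isolated supremum of $\norm{y_\rho}^2_{\Lc_2}$ subject to $\norm{y_{m_k}}^2_{\Lc_2}\leq\delta_{m_k}$, I would first note that a minimum of finitely many elements of $[0,+\infty]$ is finite if and only if at least one member is finite. Hence \eqref{lem:Jtau1_upbound_def} is bounded if and only if some isolated problem $g(m_k)$ is bounded, which reduces the lemma to a single-monitor statement.

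For a fixed $m_k$, I would characterize finiteness of $g(m_k)$ exactly as \eqref{opt_LM_security_metric} was obtained: by Lagrangian duality and the S-procedure, $g(m_k)$ is finite if and only if the dissipation inequality with supply rate $y_\rho^2-\gamma\,y_{m_k}^2$ admits a nonnegative storage function for some finite $\gamma$, i.e.\ if and only if the single-constraint version of the LMI \eqref{opt_LM_security_metric} is feasible. The systems $\Sigma_\rho=(-\bar L,e_a,e_\rho^\top,0)$ and $\Sigma_{m_k}=(-\bar L,e_a,e_{m_k}^\top,0)$ share $(-\bar L,e_a)$ and differ only in the output row, and LMI feasibility for such pairs is governed by their invariant zeros; this is precisely \cite[Th.~2]{teixeira2015strategic}, which I would invoke to conclude that $g(m_k)$ is finite if and only if every unstable invariant zero of $\Sigma_{m_k}$ is an invariant zero of $\Sigma_\rho$.

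To keep the argument self-contained I would outline both directions of this single-output criterion. Because $-\bar L$ is Hurwitz, for any unstable $\bar\lambda$ the forced zero-dynamics state is $\bar x=(\bar\lambda I+\bar L)^{-1}e_a$, so $\bar\lambda$ is a zero of $\Sigma_{m_k}$ (resp.\ $\Sigma_\rho$) if and only if $e_{m_k}^\top\bar x=0$ (resp.\ $e_\rho^\top\bar x=0$). For necessity, given an unstable zero of $\Sigma_{m_k}$ with $e_\rho^\top\bar x\neq0$, I would drive the system from $x(0)=0$ by $\zeta(t)=e^{\bar\lambda t}$; the state converges to $\bar x\,e^{\bar\lambda t}$ up to an exponentially decaying transient, so $\norm{y_{m_k}}^2_{\Lc_2}\to0$ while $\norm{y_\rho}^2_{\Lc_2}$ grows like $e^{2\mathrm{Re}(\bar\lambda)T}$, making $g(m_k)=+\infty$. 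For sufficiency I would appeal to dissipative-systems theory for quadratic supply rates \cite{trentelman1991dissipation}: when the unstable zero directions of $\Sigma_{m_k}$ are all annihilated by $e_\rho^\top$, the required storage function exists and yields $g(m_k)\leq\gamma\,\delta_{m_k}<\infty$.

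Combining the two parts gives the claim: \eqref{lem:Jtau1_upbound_def} is bounded if and only if some $g(m_k)$ is bounded, i.e.\ if and only if some $\Sigma_{m_k}$ has all its unstable invariant zeros among those of $\Sigma_\rho$. I expect the main obstacle to be the sufficiency direction of the single-output criterion, since producing the storage function from the invariant-zero inclusion needs the full dissipativity machinery rather than an explicit input, whereas the necessity construction and the minimum reduction are routine. As this single-output result is exactly \cite[Th.~2]{teixeira2015strategic}, in the final write-up I would rely on that citation and devote the new argument to the minimum combination.
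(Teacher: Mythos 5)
Your proposal is correct and takes essentially the same approach as the paper: the paper's entire proof is the one-line invocation of \cite[Th.~2]{teixeira2015strategic}, with the reduction of the finite minimum in \eqref{lem:Jtau1_upbound_def} to single-monitor problems left implicit, which you simply make explicit. Your supplementary sketch of the single-output criterion is consistent with the cited theorem (though it only covers finite unstable zeros, not the infinite-zero/relative-degree case that the paper handles later via Theorem~\ref{th:red_con}), but since your final write-up defers to the citation for that result, this does not affect correctness.
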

\begin{proof} 
	Follows directly the result in \cite[Th. 2]{teixeira2015strategic}.
\end{proof}

The result in \textit{Lemma~\ref{lem:inv_zero}} prompts us to investigate invariant zeros of $\Sigma_{m_k}$.
Let us adopt the following lemma from our previous work \cite{nguyen2022zero} that considers finite invariant zeros of $\Sigma_{m_k}$.
	\begin{Lemma}[{\cite[Lem. 4.4]{nguyen2022zero}}]
		\label{lem:no_un_zero}
		Consider a networked control system associated with an undirected connected graph $\Gc \triangleq (\Vc,\Ec, A)$, whose closed-loop dynamics is described in \eqref{sys:x}. 
		Suppose that the networked control system is driven by the stealthy data injection attack at a single attack vertex $a$, and observed by a single monitor vertex ${m_k}$, resulting in the state-space model $\Sigma_{m_k} \triangleq (-\bar L,e_a,e^\top_{m_k},0)$. 
		Then, there exist self-loop control gains $\theta_i, ~ \forall i \in \{1,2,\ldots,N\},$ in \eqref{sys:u} such that $\Sigma_{m_k}$ has no 
		finite unstable invariant zero.
		\QET
	\end{Lemma}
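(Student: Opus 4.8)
The plan is to prove existence constructively, by exhibiting a single explicit family of gains for which \emph{every} finite invariant zero lies strictly in the open left half plane, so that none can be unstable. By \textit{Definition~\ref{def:invariant_zero}} applied to $\Sigma_{m_k} \triangleq (-\bar L,e_a,e_{m_k}^\top,0)$, a finite $\bar\lambda$ is an invariant zero exactly when the square Rosenbrock pencil
\begin{align}
P_{m_k}(\bar\lambda) \triangleq \begin{bmatrix} \bar\lambda I + \bar L & -e_a \\ e_{m_k}^\top & 0 \end{bmatrix}
\end{align}
is singular; note that any nontrivial kernel vector $(\bar x,\bar g)$ automatically satisfies $\bar x \neq 0$, since $\bar x = 0$ forces $e_a\bar g = 0$ and hence $\bar g = 0$. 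Thus the finite invariant zeros are precisely the roots of $q(\bar\lambda)\triangleq \det P_{m_k}(\bar\lambda)$. First I would record that $q$ is not identically zero: a bordered-determinant expansion gives $q(s)=\big[\mathrm{adj}(sI+\bar L)\big]_{m_k,a}$, which is the numerator of the transfer function $e_{m_k}^\top(sI+\bar L)^{-1}e_a$; since $\Gc$ is connected this rational function is not the zero function, so the pencil has normal rank $N+1$ and $q$ has only finitely many roots.

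The key observation is that a \emph{uniform} choice of self-loop gains acts as a pure spectral shift. Setting $\theta_i = \theta$ for all $i$ gives $\bar L = L + \theta I$, so $\bar\lambda I + \bar L = (\bar\lambda+\theta)I + L$ and therefore
\begin{align}
\det\begin{bmatrix} \bar\lambda I + \bar L & -e_a \\ e_{m_k}^\top & 0 \end{bmatrix}
= \det\begin{bmatrix} (\bar\lambda+\theta) I + L & -e_a \\ e_{m_k}^\top & 0 \end{bmatrix}
= q_0(\bar\lambda+\theta),
\end{align}
where $q_0(s)\triangleq\det\!\big[\,sI+L,\,-e_a;\,e_{m_k}^\top,\,0\,\big]$ is the Rosenbrock determinant of the base system $\Sigma_{m_k}^{0}\triangleq(-L,e_a,e_{m_k}^\top,0)$ (the case $\Theta=0$). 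Consequently $\bar\lambda$ is a finite invariant zero of $\Sigma_{m_k}$ if and only if $\bar\lambda+\theta$ is one of $\Sigma_{m_k}^{0}$; in words, switching on the uniform gain $\theta$ rigidly translates every finite invariant zero to the left by $\theta$. (For the special case $a=m_k$ the translation is not even needed: there $q_0$ reduces to the characteristic polynomial of the grounded Laplacian $L_{\setminus a}$, which is positive definite for a connected graph, so the zeros are already stable; the shift is the device that handles the genuinely non-trivial case $a\neq m_k$, where the base system may well be non-minimum phase.)

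To finish, I would let $\{z_1,\dots,z_r\}$ be the (finite, possibly empty) set of finite invariant zeros of $\Sigma_{m_k}^{0}$ and put $R_0\triangleq\max_j\mathrm{Re}(z_j)$, with the convention $R_0=-\infty$ when $r=0$. Choosing any $\theta\in\Rbb_+$ with $\theta>R_0$ then places every finite invariant zero of $\Sigma_{m_k}$ at $z_j-\theta$, which has $\mathrm{Re}(z_j-\theta)<0$. Hence $\Sigma_{m_k}$ has no finite unstable invariant zero, which is the claim.

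The routine steps here (the bordered-determinant identity and the change of variable $s\mapsto s+\theta$) are immediate. The one point genuinely needing care, and the step I expect to be the main obstacle, is the non-degeneracy claim that $q_0\not\equiv 0$, i.e.\ that the base invariant-zero set is truly finite so that $R_0$ is well defined; this is exactly where connectivity of $\Gc$ must be invoked, guaranteeing $e_{m_k}^\top(sI+L)^{-1}e_a\not\equiv 0$. Every other ingredient is elementary linear algebra.
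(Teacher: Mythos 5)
Your proof is correct and takes essentially the same route as the paper: both arguments observe that a uniform offset $\theta$ in the self-loop gains acts as a rigid leftward shift of all finite invariant zeros of $\Sigma_{m_k}$ (the paper phrases this via the Rosenbrock equation $(\bar\lambda-\theta_0)I+\bar L+\theta_0 I$ for an auxiliary system $\tilde\Sigma_{m_k}$, you via the determinant identity $q(\bar\lambda)=q_0(\bar\lambda+\theta)$), and then choose the offset larger than the largest real part of the finitely many base zeros. Your added verification that $q_0\not\equiv 0$ (via connectivity of $\Gc$), so that the maximum $R_0$ is well defined, is a point the paper's proof uses implicitly without justification, so your write-up is in fact slightly more complete.
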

\begin{proof}
    See Appendix~\ref{app:lem_pf_no_un_zero}.
\end{proof}

\textit{Lemma~\ref{lem:no_un_zero}} enables us to carefully design the control law \eqref{sys:u}, i.e. select $\theta_i$, such that for every pair of an input vertex $a$ and an output vertex $m_k$, the corresponding LTI system $\Sigma_{m_k} = (-\bar L,e_a,e_{m_k}^\top,0)$ has no unstable invariant zero. 
%\tcr{This procedure could be done in the preparation phase that is summarized in ...}
%%
Hence, it leaves us to investigate infinite invariant zeros of systems $\Sigma_{m_k},~\forall m_k \in \Mc$ in the following subsection.
%%-----------------------------------
\subsection{Infinite invariant zeros}
\label{sec:dom_set}
%%------------------------------------
	We investigate the infinite invariant zeros of the systems $\Sigma_{\rho}$ and $\Sigma_{m_k},~\forall m_k \in \Mc$. In the investigation, we make use of known results connecting infinite invariant zeros mentioned in \textit{Definition \ref{def:invariant_zero}} and the relative degree of a linear system, which is defined below.
	\begin{Definition}[Relative degree {\cite[Ch. 13]{khalil2002nonlinear}}] \label{def:rela_deg}
		Consider the strictly proper LTI system $\bar \Sigma \triangleq (\bar A,\bar B,\bar C,0)$ with $\bar A \in \Rbb^{n \times n}$, $\bar B$, and $\bar C$ are real matrices with appropriate dimensions.
		The system $\bar \Sigma$ is said to have relative degree $r ~ (1 \leq r \leq n) $ if the following conditions satisfy
		\begin{align}
		&\bar C \bar A^{k} \bar B = 0, ~~ 0 \leq k < r-1,
		\non  \\
		&\bar C \bar A^{r-1} \bar B \neq 0. 
		\label{def_red}
		\end{align}
		\QET
	\end{Definition}
\begin{Remark} \label{rem:re_deg_tf}
Let $\bar H(s) = \bar C(sI-\bar A)^{-1}\bar B$ be the transfer function of the above system $\bar \Sigma$.
The relative degree $r$ of the system $\bar \Sigma$ defined in \textit{Definition \ref{def:rela_deg}} is also the difference between the degrees of the denominator and the numerator of $\bar H(s)$ \cite{khalil2002nonlinear}, which in turn corresponds to the degree of the infinite zero if $\bar \Sigma$ is minimal realization~\cite[Ch. 3]{franklin2002feedback}.  
\QET
\end{Remark}

Based on \textit{Definition~\ref{def:rela_deg}}, let us denote $r_{(\rho,a)}$ and $r_{(m_k,a)}$ as the relative degrees of $\Sigma_\rho$ and $\Sigma_{m_k},~\forall m_k \in \Mc$, respectively.
	In the scope of this study, we have assumed that the attack signal $\zeta(t)$ in \eqref{sys:ua} has no direct impact on the outputs \eqref{sys:yt} and \eqref{sys:ym}, resulting in strictly proper systems
	$\Sigma_\rho$ and $\Sigma_{m_k}$.
	This implies that the relative degrees $r_{(\rho,a)}$ and $r_{(m_k,a)}$ of the systems $\Sigma_\rho$ and $\Sigma_{m_k}$ are positive, yielding their infinite invariant zeros.
	Let us state the following theorem that considers infinite invariant zeros of the systems $\Sigma_\rho$ and $\Sigma_{m_k}$ to provide a necessary and sufficient condition under which the boundedness of the worst-case impact of stealthy attacks \eqref{lem:Jtau1_upbound_def} is guaranteed.
	\begin{Theorem}  \label{th:red_con} 
		Consider the strictly proper LTI systems $\Sigma_\rho \triangleq (-\bar L,e_a,e^\top_\rho,0)$ and $\Sigma_{m_k} \triangleq (-\bar L,e_a,e^\top_{m_k},0), ~\forall m_k \in \Mc$, in which the systems have the same stealthy data injection attack input \eqref{sys:ua} at a single attack vertex $a \in \Vc_{-\rho}$ but different output vertices \eqref{sys:yt}-\eqref{sys:ym}, i.e., $\rho$ for $\Sigma_\rho$ and $m_k$ for $\Sigma_{m_k}$.
		Suppose the systems $\Sigma_\rho$ and $\Sigma_{m_k}$ have relative degrees $r_{(\rho,a)}$ and $r_{(m_k,a)}$, respectively.
		Then, the worst-case impact of stealthy attacks \eqref{lem:Jtau1_upbound_def} is bounded
		if, and only if, there exists at least one system $\Sigma_{m_k}$ such that the following condition holds
		\begin{align}
		r_{(m_k,a)} \leq r_{(\rho,a)}. \label{cond_red}
		\end{align}
		\QET
	\end{Theorem}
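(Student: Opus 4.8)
The plan is to chain together the three preceding lemmas and then reduce the boundedness question to a purely arithmetic comparison of relative degrees. First I would invoke \textit{Lemma~\ref{lem:inv_zero}}, which already rephrases boundedness of \eqref{lem:Jtau1_upbound_def} as the requirement that at least one $\Sigma_{m_k}$ have all of its unstable invariant zeros shared with $\Sigma_\rho$. The task is therefore to make this invariant-zero condition explicit for the SISO strictly proper systems at hand, where the single input is $e_a$ and the single output is $e_\rho^\top$ or $e_{m_k}^\top$.

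Second, I would use \textit{Lemma~\ref{lem:no_un_zero}} to discharge the finite part of the condition: by choosing the self-loop gains $\theta_i$ as prescribed there, each $\Sigma_{m_k}$ has no finite unstable invariant zero, so the only invariant zeros of $\Sigma_{m_k}$ that can obstruct the condition of \textit{Lemma~\ref{lem:inv_zero}} are those at infinity. Hence the boundedness of \eqref{lem:Jtau1_upbound_def} becomes equivalent to the statement that the infinite invariant zeros of some $\Sigma_{m_k}$ are also (infinite) invariant zeros of $\Sigma_\rho$.

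Third, I would translate this infinite-zero bookkeeping into relative degrees through \textit{Remark~\ref{rem:re_deg_tf}}: since both systems are strictly proper and SISO, $\Sigma_{m_k}$ carries a single infinite zero of order $r_{(m_k,a)}$ and $\Sigma_\rho$ a single infinite zero of order $r_{(\rho,a)}$, read off from the Markov parameters $e_{m_k}^\top(-\bar L)^{k}e_a$ and $e_\rho^\top(-\bar L)^{k}e_a$ in \textit{Definition~\ref{def:rela_deg}}. The crux is to show that the infinite zero of $\Sigma_{m_k}$ is contained in the infinite-zero structure of $\Sigma_\rho$ precisely when $r_{(m_k,a)} \le r_{(\rho,a)}$. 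I would argue this via the high-frequency behaviour of the transfer functions: writing $H_\rho(s)=e_\rho^\top(sI+\bar L)^{-1}e_a \sim c_\rho\, s^{-r_{(\rho,a)}}$ and $H_{m_k}(s) \sim c_{m_k}\, s^{-r_{(m_k,a)}}$ as $s\to\infty$, the ratio $H_\rho/H_{m_k}$ stays proper (the infinite zero of $\Sigma_{m_k}$ is absorbed by that of $\Sigma_\rho$) iff $r_{(m_k,a)}-r_{(\rho,a)} \le 0$; otherwise $y_\rho$ can be excited at arbitrarily high frequency while $y_{m_k}$ is kept within its threshold, forcing \eqref{lem:Jtau1_upbound_def} to diverge. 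Quantifying this containment and divergence rigorously — that is, proving that an infinite zero of order $r_{(m_k,a)}$ is an invariant zero of $\Sigma_\rho$ exactly under \eqref{cond_red} — is the main obstacle, and it is also where the minimality caveat of \textit{Remark~\ref{rem:re_deg_tf}} must be verified so that relative degree and infinite-zero order coincide.

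Finally, I would assemble the two directions. For sufficiency, pick an $m_k$ satisfying \eqref{cond_red}; after the design of \textit{Lemma~\ref{lem:no_un_zero}} its only unstable zero is the infinite one, which is then an invariant zero of $\Sigma_\rho$, so \textit{Lemma~\ref{lem:inv_zero}} yields boundedness. For necessity I would argue by contraposition: if \eqref{cond_red} fails for every $m_k$, then each $\Sigma_{m_k}$ possesses an infinite zero of strictly higher order than that of $\Sigma_\rho$, which therefore cannot be an invariant zero of $\Sigma_\rho$, so \textit{Lemma~\ref{lem:inv_zero}} forces unboundedness for every admissible choice.
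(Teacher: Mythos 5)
Your proposal is correct and follows essentially the same route as the paper's proof: \textit{Lemma~\ref{lem:inv_zero}} reduces boundedness of \eqref{lem:Jtau1_upbound_def} to shared unstable invariant zeros, \textit{Lemma~\ref{lem:no_un_zero}} removes finite unstable zeros by design, and \textit{Remark~\ref{rem:re_deg_tf}} turns the remaining comparison of infinite zeros into the relative-degree condition \eqref{cond_red}. The containment step you flag as the main obstacle is settled in the paper by a short degree count over the common denominator $Q(s)$ of the two transfer functions: for $0 \leq q \leq r_{(m_k,a)}-1$, the numerator of $z^q G_{(\rho,a)}(z)$ has degree $N - r_{(\rho,a)} + q \leq N-1 < N$ whenever $r_{(m_k,a)} \leq r_{(\rho,a)}$, so every infinite zero of maximal degree $r_{(m_k,a)}$ of $\Sigma_{m_k}$ is also an infinite zero of $\Sigma_\rho$ --- the same $s \to \infty$ asymptotics you sketch, made rigorous via the limit definition of infinite invariant zeros.
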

\begin{proof}
    See Appendix~\ref{app:th_pf_red_con}.
\end{proof}

Given an arbitrary attack vertex $a$ and a distant target vertex $\rho \in \Vc_{-a}$, \textit{Theorem~\ref{th:red_con}} hints a solution to {monitoring} malicious activities. The defender chooses a non-empty monitor set $\Mc \subset \Vc$ such that there exists at least one monitor vertex $m_k \in \Mc$ that fulfills the condition \eqref{cond_red}.
The following subsection presents how to find such a monitor set $\Mc$.
{
\begin{Remark}
\label{rem:red_mimo}
    Let us consider the following continuous LTI system $\Sigma_\Mc = (-\bar L,e_a,C_\Mc^\top,0)$ where its input is at the vertex $a$ and its outputs are at monitor vertices $m_k \in \Mc$. By employing the definition of the relative degree of single-input-multiple-output systems, adapted from
    \cite{mueller2009normal}, the relative degree of the system $\Sigma_\Mc$ is the least relative degree from its input to its single monitor vertex. Thus, we need to find at least one monitor vertex $m_k$ such that it fulfills the condition \eqref{cond_red}, resulting in the boundedness of \eqref{lem:Jtau1_upbound_def}. This result eventually allows us to guarantee that the worst-case impact of stealthy attacks in \eqref{Jtau1} is bounded according to the property in \eqref{lem:Jtau1_upbound}.
    \QET
\end{Remark}
} 
% \AT{
% add a comment here about how this condition on "at least one monitor" gives guarantees in terms of the typical notion of relative degree w.r.t. one input, multiple outputs.
% }
%%
%%
%%---------------------------------
\subsection{Admissible monitor sets and dominating sets} 
	%%---------------------------------
	Consider a subset $\Mc \subset \Vc$ where its cardinality is not greater than the sensor budget $n_s$, the maximum number of available sensors, i.e., $\Mc = \{ m_1,m_2,\ldots,m_{|\Mc|}\}$ and 
	$|\Mc| \leq n_s$.
    Following the discussions in the previous subsection, a monitor set $\Mc$ is admissible if it contains at least one monitor vertex $m_k \in \Mc$ such that this vertex $m_k$ fulfills the necessary and sufficient condition \eqref{cond_red} in \textit{Theorem~\ref{th:red_con}}.
	%The admissible monitor vertex set $\Mc$ that contains at least one monitor vertex $m_k \in \Mc$ such that this vertex $m_k$ fulfills the necessary and sufficient condition \eqref{cond_red} in \textit{Theorem~\ref{th:red_con}}. 
	%%
	This set $\Mc$ is called a dominating set which is defined below.
	\begin{Definition}[Dominating set] \label{def:dominating_set}
		Given an undirected graph $\Gc \triangleq (\Vc, \Ec,A)$, a subset of the vertex set $\Dc \subset \Vc$ is called a dominating set if, for every vertex $u \in \Vc \setminus \Dc$, there is a vertex $ v\in \Dc$ such that $(u,v) \in \Ec$.
		\QET
	\end{Definition}

The following lemma presents a necessary and sufficient condition {that allows us to examine whether a subset of the vertex set $\Vc$ is a dominating set.}
%under which a subset of the vertex set is a dominating set.
	%%
	\begin{Lemma}
		\label{lem:dominating_set}
		Consider an undirected graph $\Gc \triangleq (\Vc,\Ec,A)$,
		%where $|\Vc| = N$ and $A$ is its corresponding adjacency matrix, 
		a subset $\Mc \subset \Vc$ is a dominating set of $\Vc$ {if, and only if,} the following condition holds 
		\begin{align}
		%\norm{\Cc(\Mc)}_0 = N,
        {e_i^\top \Cc(\Mc) > 0,~ \forall i \in \Vc,}
		\label{dom_set_cond}
		\end{align}
		where $\Cc(\Mc) = (A + I) \sum_{m_k \in \Mc}  e_{m_k}$.
  %and $N$ is the cardinality of the vertex set $\Vc$.
		\QET
	\end{Lemma}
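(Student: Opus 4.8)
The plan is to prove both directions simultaneously by computing the $i$-th entry of the vector $\Cc(\Mc)$ in closed form and reading off exactly when it is strictly positive. First I would observe that $\sum_{m_k \in \Mc} e_{m_k}$ is simply the indicator (characteristic) vector of the set $\Mc$, carrying a $1$ in every coordinate indexed by a vertex of $\Mc$ and $0$ elsewhere. Expanding the $i$-th coordinate of $\Cc(\Mc) = (A+I)\sum_{m_k \in \Mc} e_{m_k}$ then yields
\[
e_i^\top \Cc(\Mc) \;=\; e_i^\top (A+I) \sum_{m_k \in \Mc} e_{m_k} \;=\; \sum_{m_k \in \Mc} \big( a_{i,m_k} + [\,i = m_k\,] \big),
\]
where $[\,i=m_k\,]$ is the Kronecker delta contributed by the identity matrix $I$.

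Next I would invoke the sign structure of the adjacency matrix fixed in the Notation: every entry $a_{ij}$ is nonnegative, with $a_{ij} > 0$ precisely when $(i,j) \in \Ec$ and $a_{ij} = 0$ otherwise. Since the diagonal contribution $[\,i=m_k\,]$ is nonnegative as well, every summand in the expression above is nonnegative, so no cancellation can occur. This is the one point that genuinely deserves care, because it is exactly the no-cancellation property that converts a statement about a weighted sum being positive into a purely combinatorial adjacency condition. It follows that $e_i^\top \Cc(\Mc) > 0$ if, and only if, at least one summand is strictly positive, i.e. either $i \in \Mc$ (forcing the delta term to equal $1$) or there exists some $m_k \in \Mc$ with $a_{i,m_k} > 0$, equivalently $(i,m_k) \in \Ec$.

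Finally I would match this characterization with \textit{Definition~\ref{def:dominating_set}} by splitting on membership of $i$. For $i \in \Mc$ the inequality \eqref{dom_set_cond} holds automatically and imposes no constraint, consistent with the fact that domination only concerns vertices outside the set. For $i \in \Vc \setminus \Mc$ the delta term vanishes, so positivity of $e_i^\top \Cc(\Mc)$ is equivalent to $i$ possessing a neighbour in $\Mc$, which is exactly the domination requirement for vertices not in $\Mc$. Requiring \eqref{dom_set_cond} for every $i \in \Vc$ therefore coincides verbatim with $\Mc$ being a dominating set, establishing the two implications at once. I do not anticipate any substantive obstacle: the argument is a direct entrywise computation, and the only subtlety is the nonnegativity/no-cancellation step noted above.
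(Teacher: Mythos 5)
Your proof is correct and follows essentially the same route as the paper's: the paper likewise decomposes $\Cc(\Mc)$ into its adjacency part $\sum_{m_k \in \Mc} A e_{m_k}$ and identity part $\sum_{m_k \in \Mc} e_{m_k}$ and reads off the entrywise characterization (entry $i$ is nonzero exactly when $i$ is in $\Mc$ or has a neighbour in $\Mc$). Your write-up is in fact slightly tighter than the paper's, since you make explicit the no-cancellation step (nonnegativity of $a_{ij}$ turning ``nonzero'' into ``positive'') and state both directions of the equivalence, whereas the paper only spells out the sufficiency direction and leaves the converse implicit in the entrywise analysis.
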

	%%
 % ~\\
 % \TN{From \cite{ramirez2013L1}, can we use $\norm{\Cc(\Mc)}_0 \approx \norm{\Cc(\Mc)}_\epsilon \triangleq \sum_{i=1}^N |\Cc(\Mc)_i|^\epsilon$, where $\epsilon > 0$ is small enough?}
 % \tcr{
 % \begin{align}
 %     N - \beta \epsilon \leq \sum_{i = 1}^N \Big[ e_i^\top \sum_{j=1}^N (A+I) \lambda_j e_j \Big]^\epsilon \approx N \leq N + \beta \epsilon, \\
 %     \sum_{i=1}^N \lambda_i \leq n_s, \\ 
 %     \lambda_i = \{0,1\}.
 % \end{align}
 % }
 \begin{proof}
     See Appendix~\ref{app:lem_pf_dominating_set}.
 \end{proof}
	% \begin{proof}
	% 	Let us decompose $\Cc(\Mc) = \Cc_A(\Mc) + \Cc_I(\Mc)$ where $\displaystyle \Cc_A(\Mc) = \sum_{{m_k} \in \Mc} A e_{m_k}$ and $\displaystyle \Cc_I(\Mc) = \sum_{{m_k} \in \Mc} e_{m_k}$. Entry $i$-th of $\Cc_I(\Mc)$ takes  $0$ if vertex $i$ does not belong to $\Mc$ and $1$ if vertex $i$ belongs to $\Mc$.
	% 	%%
	% 	Entry $i$-th of $\Cc_A(\Mc)$ takes $0$ if all the neighbors of vertex $i$ do not belong to $\Mc$ and a non-zero value if at least one neighbor of vertex $i$ belongs to $\Mc$.
	% 	Thus, entry $i$-th of $\Cc(\Mc)$ takes $0$ if vertex $i$ and all of its neighbors do not belong to $\Mc$; takes a non-zero value if vertex $i$ or one of its neighbors belong to $\Mc$.
	% 	%%
	% 	If the condition \eqref{dom_set_cond} fulfills, the vector $\Cc(\Mc)$ has no zero entry.
	% 	%%
	% 	This implies that an arbitrary vertex in $\Vc$ is either a vertex of $\Mc$ or a neighbor of a vertex of $\Mc$, resulting in that $\Mc$ is a dominating set.
	% \end{proof}
	%%

	By investigating all the subsets of $\Vc$, we can find all the dominating sets which fulfill the condition \eqref{dom_set_cond}.  Let us make use of the following assumption.
 %to ensure that the defender has admissible strategies.
%%
\begin{Assumption}
	\label{assumption:dom_set_exist}
	The vertex set $\Vc$ has at least one dominating set {with a cardinality of at most $n_s$.}
 %such that it contains at most $n_s$ elements. 
    \QET
\end{Assumption}

Based on  \textit{Assumptions~\ref{assumption:attack_not_performance}-\ref{assumption:dom_set_exist}} and the above results in \textit{Lemma~\ref{lem:J_upbound}} and \textit{Theorem \ref{th:red_con}}, we are now ready to state the following theorem that provides a graph-theoretic necessary and sufficient condition under which the cost \eqref{game_payoff_def} and the expected worst-case impact of stealthy attacks \eqref{game_payoff_Q}, caused by the stealthy data injection attack at an arbitrary attack vertex $a$, are bounded.
	%%
% \begin{Theorem}
% 	\label{th:dom_set_bound_impact} 
%     Suppose that \textit{Assumptions~\ref{assumption:attack_not_performance}-\ref{assumption:dom_set_exist}} hold.
% 	Consider the  systems $\Sigma_\rho \triangleq (-\bar L,e_a,e_\rho^\top,0)$ and $\Sigma_\Mc \triangleq (-\bar L,e_a, C_\Mc^\top,0)$ associated with an undirected connected graph $\Gc$. 
% %%
% 	The two systems have the same stealthy data injection attack at the input of { an arbitrary} attack vertex $a \in \Vc_{-\rho}$ but $\Sigma_\rho$ has an output at {an arbitrary} performance vertex $\rho$ and $\Sigma_\Mc$ has outputs at monitor vertices $m_k \in \Mc$. 
%  %%
%     If the monitor set $\Mc$ is a dominating set of $\Gc$, then, the \tcr{expected} worst-case impact of stealthy attacks $Q(a,\Mc)$ in \eqref{game_payoff_Q} is bounded.
% \end{Theorem}
%%
\begin{Theorem}
    \label{th:dom_set_bound_impact} 
    Suppose that \textit{Assumptions~\ref{assumption:attack_not_performance}-\ref{assumption:dom_set_exist}} hold.
    Consider the networked control system \eqref{sys:x} associated with an undirected connected graph $\Gc$ where the system has the stealthy data injection attack \eqref{sys:ua} at the input of an arbitrary attack vertex $a$ and outputs \eqref{sys:ym} at monitor vertices $m_k \in \Mc$. 
    %%
    %If the monitor set $\Mc$ is a dominating set of $\Gc$, then, the expected worst-case impact of stealthy attacks $Q(a,\Mc)$ defined in \eqref{game_payoff_Q} is bounded.
    The {defense} cost $R(a,\Mc)$ in \eqref{game_payoff_def} and
    the expected worst-case impact of stealthy attacks $Q(a,\Mc)$ in \eqref{game_payoff_Q} are bounded if, and only if, the monitor set $\Mc$ is a dominating set of $\Gc$.
    \QET
\end{Theorem}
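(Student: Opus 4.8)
The plan is to reduce the boundedness of the two aggregate costs to a purely graph-theoretic distance condition and then recognize that condition as domination. First I would peel off the trivial pieces of \eqref{game_payoff_def}: since $\cfr(|\Mc|)$ is bounded for every $\Mc$, and $Q(a,\Mc)=\sum_{\rho\in\Vc_{-a}}\pi_a(\rho)J_\rho(a,\Mc)$ is a finite sum with strictly positive weights $\pi_a(\rho)>0$ and nonnegative summands $J_\rho(a,\Mc)\ge 0$, the cost $R(a,\Mc)$ is bounded if and only if $Q(a,\Mc)$ is bounded, which in turn holds if and only if $J_\rho(a,\Mc)$ in \eqref{Jtau1} is bounded for \emph{every} target $\rho\in\Vc_{-a}$ (here \textit{Assumption~\ref{assumption:attack_not_performance}} guarantees $\rho\ne a$). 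This isolates the single-target worst-case impact as the only analytic object.

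Next I would turn the control-theoretic criterion of \textit{Theorem~\ref{th:red_con}} into an explicitly graph-theoretic one. Choosing the self-loop gains $\theta_i$ according to \textit{Lemma~\ref{lem:no_un_zero}} removes every finite unstable invariant zero of each $\Sigma_{m_k}$, so in \textit{Lemma~\ref{lem:inv_zero}} only the zeros at infinity survive and boundedness of \eqref{lem:Jtau1_upbound_def} is governed solely by the relative degrees via \eqref{cond_red}. The key computation is to show $r_{(m_k,a)}=d(m_k,a)+1$, where $d(\cdot,\cdot)$ denotes graph distance: expanding $e_{m_k}^\top(-\bar L)^{k}e_a$ counts weighted walks of length $k$ from $a$ to $m_k$; the off-diagonal entries of $-\bar L$ are the nonnegative adjacency weights $a_{ij}$; and at the minimal length $k=d(m_k,a)$ every contributing walk is a geodesic of strictly positive weight, so no cancellation occurs and this is the first nonzero power. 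Hence \eqref{cond_red} reads $d(m_k,a)\le d(\rho,a)$, and for a fixed $a$, \textit{Lemma~\ref{lem:J_upbound}} together with \textit{Theorem~\ref{th:red_con}} shows $J_\rho(a,\Mc)$ is bounded whenever some monitor has $d(m_k,a)\le d(\rho,a)$, i.e. whenever $d(a,\Mc)\triangleq\min_{m_k\in\Mc}d(m_k,a)\le d(\rho,a)$.

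Aggregating over targets, $J_\rho(a,\Mc)$ is bounded for every $\rho\in\Vc_{-a}$ exactly when $d(a,\Mc)\le\min_{\rho\in\Vc_{-a}}d(\rho,a)$; because $\Gc$ is connected, $a$ has at least one neighbour and this minimum equals $1$, so the per-vertex criterion collapses to $d(a,\Mc)\le 1$, i.e. $a$ lies in $\Mc$ or has a neighbour in $\Mc$. Since the attack vertex is arbitrary, requiring boundedness for all $a\in\Vc$ is equivalent to $d(a,\Mc)\le 1$ for every $a$, which is precisely the statement that $\Mc$ is a dominating set in the sense of \textit{Definition~\ref{def:dominating_set}} (and testable through \textit{Lemma~\ref{lem:dominating_set}}). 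Chaining the equivalences of the three paragraphs then yields the claimed ``if and only if''.

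The main obstacle is the necessity direction for the true impact $J_\rho$ rather than its upper bound $\overline{J}_\rho$: \textit{Lemma~\ref{lem:J_upbound}} gives only $J_\rho\le\overline{J}_\rho$, so $\overline{J}_\rho$ being unbounded does not by itself make $J_\rho$ unbounded. When every monitor is strictly farther than $\rho$, that is $d(m_k,a)>d(\rho,a)$ and hence $r_{(m_k,a)}>r_{(\rho,a)}$ for all $k$, I must exhibit a \emph{single} feasible attack keeping all monitor energies below threshold while $\|y_\rho\|_{\Lc_2}\to\infty$. The plan is to use a high-frequency input $\zeta(t)=\alpha\sin(\omega t)$: as $\omega\to\infty$ each monitor response decays like $\omega^{-r_{(m_k,a)}}$ whereas the target response decays only like $\omega^{-r_{(\rho,a)}}$, so scaling $\alpha$ to saturate the tightest monitor constraint still lets $\|y_\rho\|_{\Lc_2}^2$ grow without bound, certifying $J_\rho=\infty$. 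Making this simultaneous construction rigorous, including the $T\to\infty$ steady-state limit of the time-averaged $\Lc_2$ norm, is the delicate step; the remainder is bookkeeping.
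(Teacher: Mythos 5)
Your proposal is correct and shares the paper's overall skeleton (reduce boundedness of \eqref{game_payoff_def}--\eqref{game_payoff_Q} to per-target boundedness of $J_\rho(a,\Mc)$, pass to the relative-degree condition \eqref{cond_red} via Lemma~\ref{lem:no_un_zero}, Lemma~\ref{lem:inv_zero}, and Theorem~\ref{th:red_con}, then identify admissibility with domination), but two of your steps genuinely differ from the paper's proof and are worth comparing. First, where the paper argues case-wise about relative degrees, you prove the walk-counting identity $r_{(m_k,a)}=d(m_k,a)+1$; this is correct (the off-diagonal entries of $-\bar L$ are the nonnegative weights $a_{ij}$, and at length $d(m_k,a)$ only geodesics of strictly positive weight contribute, so no cancellation occurs), and it quietly repairs an off-by-one slip in the paper, which asserts $r_{(m_k,a)}\leq 1$ for a monitor adjacent to $a$ and $r_{(\rho,a)}=1$ for $(\rho,a)\in\Ec$, whereas Definition~\ref{def:rela_deg} gives the value $2$ in both cases; only the comparison $r_{(m_k,a)}\leq r_{(\rho,a)}$ matters, so the paper's conclusions survive. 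Second, and more substantively, you are right that necessity is the delicate point: Theorem~\ref{th:red_con} certifies unboundedness only of the upper bound $\overline{J}_\rho$ in \eqref{lem:Jtau1_upbound_def}, and $J_\rho\leq\overline{J}_\rho$ from Lemma~\ref{lem:J_upbound} gives nothing in that direction. The paper's own proof steps over this, concluding that $J_\rho(a,\Mc)$ is unbounded directly from the violation of \eqref{cond_red}; the implicit justification is Remark~\ref{rem:red_mimo}, i.e., applying the invariant-zero criterion of Lemma~\ref{lem:inv_zero} to the SIMO system $\Sigma_\Mc$, whose relative degree is $\min_{m_k\in\Mc} r_{(m_k,a)}$, so that a single attack is simultaneously invisible (asymptotically) to all monitors. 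Your high-frequency sinusoid is a sound, self-contained substitute for that citation: each monitor gain decays like $\omega^{-r_{(m_k,a)}}$ while the target gain decays like $\omega^{-r_{(\rho,a)}}$, so scaling the amplitude to saturate the tightest monitor threshold makes the target energy grow like $\omega^{2\left(\min_k r_{(m_k,a)}-r_{(\rho,a)}\right)}\to\infty$, certifying $J_\rho=\infty$. What your route buys is an explicit destabilizing attack and independence from the cited MIMO zero-dynamics machinery; what the paper's route buys is brevity. To finish your version rigorously you still need the steady-state limit of the time-averaged $\Lc_2$ norm under the sinusoidal input (harmless, since $-\bar L$ is Hurwitz and transients wash out in the $\frac{1}{T}\int_0^T$ average), as you yourself acknowledge.
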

%%
% \begin{proof}
%     The proof is postponed to Appendix~.
% \end{proof}
\begin{proof} 
    Let us consider the following continuous LTI systems $\Sigma_\rho \triangleq (-\bar L,e_a,e_\rho^\top,0)$ and $\Sigma_{m_k} \triangleq (-\bar L,e_a,e_{m_k}^\top,0),~\forall m_k \in \Mc$.
    The systems have the same stealthy data injection attack at the input of an arbitrary attack vertex $a$ but $\Sigma_\rho$ has an output at an arbitrary {target} vertex $\rho \in \Vc_{-a}$ and $\Sigma_{m_k}$ has an output at monitor vertex $m_k \in \Mc$. 
    Based on \textit{Definition~\ref{def:rela_deg}}, 	\textit{Assumption~\ref{assumption:attack_not_performance}} guarantees that the relative degree of $\Sigma_\rho$ is not lower than one, i.e., $r_{(\rho,a)} \geq 1$.

    We begin by providing sufficiency. \textit{Assumption~\ref{assumption:dom_set_exist}} ensures that there exists at least one dominating set that has at most $n_s$ elements. 
    Therefore, the defender selects the monitor set $\Mc$ as one of such dominating sets. 
    According to \textit{Definitions~\ref{def:rela_deg}-\ref{def:dominating_set}}, 
    %the relative degree of the system $\Sigma_{m_k}$, where its input is at an arbitrary attack vertex $a$ and its output is at the monitor vertex $m_k~(m_k \in \Mc)$, is not greater than one, i.e., $r_{(m_k,a)} \leq 1$. 
    there exists at least one system $\Sigma_{m_k}$, where its input is at an arbitrary attack vertex $a$ and its output is at the monitor vertex $m_k~(m_k \in \Mc)$, such that its relative degree is not greater than one, i.e., $r_{(m_k,a)} \leq 1$.
    Based on the above observation, one has 
    $r_{(m_k,a)} \leq 1 \leq r_{(\rho,a)},$
    fulfilling \eqref{cond_red}. 
    From the results in \textit{Theorem~\ref{th:red_con}} and \textit{Lemma~\ref{lem:J_upbound}}, the satisfaction of \eqref{cond_red} allows us to guarantee the boundedness of the worst-case impact of stealthy attacks \eqref{Jtau1}.
    %of the necessary and sufficient condition \eqref{cond_red} results in the boundedness of the worst-case impact of stealthy attacks \eqref{lem:Jtau1_upbound_def}. 
    %%
    %According to \eqref{lem:Jtau1_upbound} in \textit{Lemma~\ref{lem:J_upbound}}, this boundedness allows us to guarantee the boundedness of the worst-case impact of stealthy attacks \eqref{Jtau1}. 
    %%
    Therefore, the {defense} cost $R(a,\Mc)$ and the expected worst-case impact of stealthy attacks $Q(a,\Mc)$ are bounded based on their definitions in \eqref{game_payoff_def}-\eqref{game_payoff_Q}.

    For necessity, let us present a contradiction argument by assuming that the {defense} cost $R(a,\Mc)$ and the expected worst-case impact of stealthy attacks $Q(a,\Mc)$ are bounded while the monitor set $\Mc$ is not a dominating set of $\Gc$. Based on the definitions of $Q(a,\Mc)$ and $R(a,\Mc)$ in \eqref{game_payoff_def}-\eqref{game_payoff_Q}, they are bounded if, and only if, $J_\rho(a,\Mc)$ is bounded for all pairs of $\rho$ and $a$. Since the attack vertex $a$ can be chosen arbitrarily and the monitor set $\Mc$ is not a dominating set, the attack vertex $a$ can be chosen such that it does not belong to $\Mc$ and none of its neighbors belongs to $\Mc$, resulting in $r_{(m_k,a)} > 1~\forall m_k \in \Mc$. On the other hand, the adversary considers all the possibilities of the target vertex $\rho$ including $(\rho,a) \in \Ec$, resulting in $r_{(\rho,a)} = 1$. The above observation gives us $r_{(\rho,a)} = 1 < r_{(m_k,a)},~\forall m_k \in \Mc$,
    violating the necessary and sufficient condition \eqref{cond_red}. Hence, for this particular pair of $\rho$ and $a$, the worst-case impact of stealthy attacks $J_\rho(a,\Mc)$ is unbounded, contradicting the assumption.
\end{proof}

\textit{Lemma~\ref{lem:dominating_set}} enables us to determine whether a subset of $\Vc$ is a dominating set. 
On the other hand, \textit{Theorem~\ref{th:dom_set_bound_impact}} affords us to restrict the admissible actions of the defender to dominating sets of $\Vc$. 
This step is beneficial to the defender in selecting monitor vertices such that the {defense} cost \eqref{game_payoff_def} and the {expected} worst-case impact of stealthy attacks \eqref{game_payoff_Q} are always bounded.
More detail on how the defender and the malicious adversary select their actions is given in the following section.
%Before going into analyzing the strategies of the defender and the malicious adversary through a game-theoretic framework in the following section, 
%%
% \begin{Remark}
%     Regarding the concept of dominating sets,  \textit{Lemma~\ref{lem:dominating_set}} is an alternative presentation of \textit{Definition~\ref{def:dominating_set}}. We can easily see their equivalence through the proof of \textit{Lemma~\ref{lem:dominating_set}} in Appendix~\ref{app:lem_pf_dominating_set}. Given the graph configuration consisting of vertices and edges, \textit{Definition~\ref{def:dominating_set}} affords us to characterize dominating sets. Meanwhile, \textit{Lemma~\ref{lem:dominating_set}} provides an algebraic condition that allows us to find dominating sets when the adjacency matrix and canonical basis vectors, representing single vertices, are given. 
%     \QET
% \end{Remark}
%%
{
\begin{Remark}
    \label{rem:parallel}
    The condition \eqref{dom_set_cond} enables us to seek all the dominating sets of a given network. 
    Leveraging the structure of \eqref{dom_set_cond}, we sequentially multiply each row of $(A+I)$ with $\sum_{m_k \in \Mc} e_{m_k}$. Whenever the result is equal to zero, we stop the examination and determine that the subset $\Mc$ is not a dominating set. Otherwise, it is a dominating set. 
    %This way of examination will foster the procedure of finding all the dominating sets of the network. 
    Moreover, since the examination of each subset of the vertex set $\Vc$ is independent, it can be executed by parallel computations with the help of computer clusters.
    %, which can be conducted in a parallel manner by using parallel computing toolboxes such as Matlab Parallel Computing Toolbox \cite{matlabpct}. 
    \QET
\end{Remark}
}
%%
% {
% \begin{Remark}
%     The main result of the paper, which is presented in \textit{Theorem~\ref{th:dom_set_bound_impact}}, can also be described in a more intuitive way as follows: each attack vertex must be monitored by the defender or must be adjacent to a vertex monitored by the defender. However, formulating the necessary and sufficient condition in terms of dominating sets, shown in \textit{Theorem~\ref{th:dom_set_bound_impact}}, enables us to inherit existing results on dominating sets in dealing with other security applications in future work. For example, we can use the result of connected dominating sets \cite{du2012connected} if monitor vertices are required to be connected. The result of minimum dominating sets \cite{zhao2020minimum} can be employed if the defender is required to minimize their defense budget. Further, our vision is to develop the notion of dominating sets to deal with more sophisticated attacks such as multiple attack vertices and communication edge removal attacks. 
%     \QET
% \end{Remark}
% }
%%
%%-------------------------
\section{Stackelberg security game}
\label{sec:game}
%%-------------------------
In this section, to assist the defender and the malicious adversary in selecting their best actions, we employ the Stackelberg game-theoretic framework where the defender acts as \textit{the leader} and the malicious adversary acts as \textit{the follower} of the game. Subsequently, we provide an algorithm to illustrate the procedure of how the two agents seek their best actions.
%%-------------------------
\subsection{Game setup}
%%-------------------------
To investigate the best actions of the defender and the adversary, we assume that they are two strategic players in a game. 
The defender can select at most $n_s$ monitor vertices on which to place one sensor at each selected vertex with the purpose of {monitoring} malicious activities.
Given \textit{Assumption~\ref{assumption:dom_set_exist}}, let us denote the {collection} of dominating sets as $\Dbb$, where each dominating set has at most $n_s$ elements, i.e., $\Dbb= \{\Mc~|~ \Mc \subset \Vc,~|\Mc|\leq n_s,~ \Mc ~\text{satisfies \eqref{dom_set_cond}} \}$.
This {collection} $\Dbb$ is chosen as the action space of the defender.
Meanwhile, the malicious adversary is able to select any vertex to conduct the stealthy data injection attack, i.e., the action space of the malicious adversary is $\Abb = \Vc$.
%%
%Recall \textit{Assumption~\ref{assumption:attack_not_performance}}, the adversary should assume that the attack vertex $a$ is not the performance vertex $\rho$.

Based on the catastrophic consequences caused by famous malware such as Stuxnet and Industroyer \cite{falliere2011w32,kshetri2017hacking}, the defender should decide their defense strategy regardless of the presence of malicious adversaries since the defender does not know when adversaries appear. 
Consequently, it is reasonable to let the defender select and announce their action publicly before the presence of the adversary \cite{li2018false,shukla2022robust,yuan2019stackelberg}.
The defender is called \textit{the leader} {while the malicious adversary is called \textit{the follower}}.
%of the Stackelberg game \cite{bacsar1998dynamic}. 
The purpose of the defender is to minimize the {defense} cost $R(a,\Mc)$ in \eqref{game_payoff_def} {with knowing that the malicious adversary bases their action on the leader's decision.}
Thus, the leader considers the following problem.
\begin{Problem}
\label{prob:min_game_payoff}
{Given that the malicious adversary maximizes \eqref{game_payoff_Q},}
the defender selects an optimal dominating set 
$\Mc^\star \in \Dbb$ that minimizes the cost \eqref{game_payoff_def}. \QET
\end{Problem}
\begin{table}[!t]
\caption{Components of the Stackelberg security game between a defender and a malicious adversary.}
    \label{tab:intro_security_game}
    \centering
    \begin{tabular}{|l|l|}
        \hline  
       \textbf{Component}  & \textbf{Description}  \\ \hline \hline 
       \textit{Players}  & Defender and Adversary \\ \hline \textit{Model knowledge} & The vertex set $\Vc$, the edge set $\Ec$, the self-loop 
       \\ 
       \textit{of two players} & gains $\theta_i$, the alarm threshold $\delta_i~(\forall i \in \Vc)$ \\ \hline
       \textit{Action Space} & Defender: $\Dbb= \{\Mc~|~ \Mc \subset \Vc,|\Mc|\leq n_s, \eqref{dom_set_cond} \}$ \\
       & Adversary: $\Abb = \Vc$ \\ \hline 
       \textit{Game Payoff} & Defender minimizes $R(a,\Mc)$ defined in \eqref{game_payoff_def} \\ 
       \textit{\& Goal}& Adversary maximizes $Q(a,\Mc)$ defined in \eqref{game_payoff_Q} \\ \hline 
       \textit{Information} & Defender takes action first \\
       \textit{Structure} & Adversary responds to Defender's action \\ \hline
    \end{tabular}
\end{table}
%%+===========================

We cast the \textit{Problem~\ref{prob:min_game_payoff}} in the Stackelberg game-theoretic framework with the defender as the leader, who selects and announces their action first, and the malicious adversary as the follower. 
The components of the Stackelberg game between the defender and the malicious adversary are summarized in Table~\ref{tab:intro_security_game}. 
This Stackelberg game always admits an optimal action \cite{bacsar1998dynamic}, which is defined below.
\begin{Definition}[Stackelberg optimal action {\cite{simaan1973stackelberg}}] 
    \label{def:Stac_opt}
    If there exists a mapping $\Tc: \Dbb \ra \Abb$ such that, for any fixed $\Mc \in \Dbb$, one has $Q(\Tc \Mc,\Mc) \geq Q(a,\Mc)$ for all $a \in \Abb$, and if there exists  $\Mc^\star \in \Dbb$ such that $R(\Tc \Mc^\star,\Mc^\star) \leq R(\Tc \Mc,\Mc)$ for all $\Mc \in \Dbb$, then the pair $(a^\star(\Mc^\star),\Mc^\star) \in \Abb \times \Dbb$, where $a^\star(\Mc^\star) = \Tc \Mc^\star$, is called a Stackelberg optimal action with the defender as the leader and the adversary as the follower of the game.
    \QET
\end{Definition}

Based on \textit{Definition~\ref{def:Stac_opt}}, we first analyze the Stackelberg optimal action and then provide two solutions that find it in the following subsection.
%%-------------------------
\subsection{Stackelberg optimal action}
%%-------------------------
Recall \textit{Problem~\ref{prob:min_game_payoff}} and \textit{Definition~\ref{def:Stac_opt}}, the defender finds their optimal action by solving the following optimization problem:
\begin{align}
    \Mc^\star &= \arg \min_{\Mc \, \in \, \Dbb} ~ R (a^\star(\Mc),\Mc),\label{Eq_M_star} 
\end{align}
where
\begin{align}
    a^\star(\Mc) &= \arg \max_{a \, \in \, \Abb} ~{Q} (a,\Mc). \label{Eq_a_star}
\end{align}
%%
% The optimization problems \eqref{Eq_M_star} and \eqref{Eq_a_star} might offer more than one action $\Mc^\star$ to the defender and more than one action $a^\star(\Mc)$. 
% %%
% Let us recall that the two strategic players have different objective functions.
% %%
% The defender considers the worst-case scenario in which to select $a^\star(\Mc)$ such that it is one of the maximizers of $R(a^\star(\Mc),\Mc)$ (see lines~\ref{alg:a_star_if}-\ref{alg:a_star_endif} in \textit{Algorithm~\ref{al:best_strategy}}).
% %%
% Then,
% the defender determines their action by selecting the optimal solution $\Mc^\star$ such that it is one of the minimizers of $R(a^\star(\Mc),\Mc^\star)$ (see lines~\ref{alg:M_star_if}-\ref{alg:M_star_endif} in \textit{Algorithm~\ref{al:best_strategy}}).
%%
% After observing the defender's optimal action, the adversary finds their optimal action by solving the following optimization problem:
% %%
% \begin{align} 
%     a^\star(\Mc^\star) &= \arg \max_{a \in \Abb} ~ Q(a,\Mc^\star). \label{Eq_a_Mstar}
% \end{align}
%%
% Analogously with the defender, the optimization problem \eqref{Eq_a_Mstar} might give the adversary more than one solution $a^\star(\Mc^\star)$. Thus, the adversary determines their action by selecting the optimal solution $a^\star(\Mc^\star)$ such that it is one of the maximizers of $Q(a^\star(\Mc),\Mc)$ (see lines~\ref{alg:aM_star_if}-\ref{alg:aM_star_endif} in \textit{Algorithm~\ref{al:best_strategy}}).
%%
One can verify that
the optimal solution $(a^\star(\Mc^\star),\Mc^\star)$ found through solving the optimization problems~\eqref{Eq_M_star}-\eqref{Eq_a_star} is equivalent to the one in \textit{Definition~\ref{def:Stac_opt}}.
{
To obtain the best monitor set $\Mc^\star$ by solving \eqref{Eq_M_star}, we provide the following proposition.
\begin{Proposition}
    \label{proposition:bigLMI}
    Consider the networked control system \eqref{sys:x}-\eqref{sys:ym} associated with an undirected connected graph $\Gc$.  Denote $\Dbb$ as a non-empty collection of all the dominating sets $\Dc_i$ of the graph $\Gc$ with a cardinality of at most $n_s$, i.e., $\Dbb = \{ \Dc_1, \Dc_2, \ldots, \Dc_{|\Dbb|} \}$.  For each dominating set $\Dc_i$, let $z_i = \sum_{m \, \in \, \Dc_i} e_{m}$ be an $N$-dimensional binary vector, where $j$-th entry of $z_i$ being equal to $1$ indicates that vertex $j$ belongs to $\Dc_i$.  Define $v$ as a $|\Dbb|$-dimensional binary vector where $i$-th entry of $v$ being equal to $1$ indicates that $D_i$ is chosen as a monitor set. Then, the optimal monitor set is determined by $v^\star$, which is the optimal solution to the following mixed-integer semidefinite programming problem:
    \begin{align}
        %R(a^\star(\Mc^\star,\Mc^\star)) =
        &  \hspace{-0.8cm}
        \min_{\bar z_{(a,\rho)} \, \in \, \Rbb^N, \, v \, \in \, \{0,1\}^{|\Dbb|}, \, P_{(a,\rho)} \, \in \, \Rbb^{N \times N}, \, \beta \, > \, 0}~~ \cfr(|\Mc|) +  \beta \label{mixedintegersdp} \\
        ~~~~~&\text{s.t.}~~
        \textbf{1}_{|\Dbb|}^\top v = 1,~
        P_{(a,\rho)} = P_{(a,\rho)}^\top \geq 0, \non \\
        &~~~~
        \sum_{\rho \in \Vc_{-a}} \pi_a(\rho) \, \delta^\top \bar z_{(a,\rho)} \leq \beta,
        \non \\
        &~~~~
        0 \leq \bar z_{(a,\rho)} \leq \tilde M \, [z_1,z_2,\ldots,z_{|\Dbb|}] \, v, \non \\
        &
        \ba{cc}
        - \bar L P_{(a,\rho)} - P_{(a,\rho)} \bar L & P_{(a,\rho)} e_a \\
        e_a^\top P_{(a,\rho)} & 0 
        \ea + \textbf{diag} \left( \ba{c} e_\rho \\ 0 \ea \right)  \non \\
        &\hspace{0.5cm}
        - \textbf{diag} \left( \ba{c} \bar z_{(a,\rho)} \\ 0 \ea \right) \leq 0,~~
        \forall \, (a \, , \,\rho) \in \Vc \times \Vc_{-a}, \non 
    \end{align}
    where $\textbf{1}_{|\Dbb|}$ stands for a ${|\Dbb|}$-dimensional all-one vector,
    %the sensor-to-cost $\cfr(|\Mc|)$ is assumed to be a linear function of $|\Mc|$, 
    $\delta = [\delta_1,\delta_2,\ldots,\delta_N]^\top$ is the alarm threshold vector of all the vertices, and
    $\tilde M$ is a large positive number, also called a ``big M'' \cite{milovsevic2023strategic}. \QET
\end{Proposition}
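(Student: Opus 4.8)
The plan is to reduce the bilevel Stackelberg program \eqref{Eq_M_star}--\eqref{Eq_a_star} to a single-level epigraph problem, then substitute the SDP characterization of $J_\rho$ from \eqref{opt_LM_security_metric}, and finally encode the discrete choice $\Mc \in \Dbb$ with a binary selection vector and a big-M linking constraint. First I would eliminate the follower. Since $\cfr(|\Mc|)$ does not depend on $a$ and the follower's best response maximizes $Q$, the value returned to the leader through \eqref{Eq_a_star} is $R(a^\star(\Mc),\Mc) = \cfr(|\Mc|) + \max_{a \in \Abb} Q(a,\Mc)$ by \eqref{game_payoff_def}. Hence \eqref{Eq_M_star} is the min-max program $\min_{\Mc \in \Dbb}\big[\cfr(|\Mc|) + \max_{a \in \Vc} Q(a,\Mc)\big]$. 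Introducing the epigraph variable $\beta$ linearizes the inner maximum, yielding the equivalent problem of minimizing $\cfr(|\Mc|)+\beta$ subject to $Q(a,\Mc) \le \beta$ for every $a \in \Vc$.

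Next I would fold in the inner SDPs. By \eqref{game_payoff_Q}, $Q(a,\Mc) = \sum_{\rho \in \Vc_{-a}} \pi_a(\rho) J_\rho(a,\Mc)$, and by \eqref{opt_LM_security_metric} each $J_\rho(a,\Mc)$ is itself a minimization over $(\gamma_{m_k}, P)$. Collecting the multipliers into a vector $\bar z_{(a,\rho)} \in \Rbb^N$ supported on $\Mc$ identifies $\sum_{m_k} \gamma_{m_k} e_{m_k} e_{m_k}^\top$ with $\textbf{diag}(\bar z_{(a,\rho)})$ and $e_\rho e_\rho^\top$ with $\textbf{diag}(e_\rho)$ in the top-left block, so the block LMI of \eqref{opt_LM_security_metric} becomes verbatim the one in \eqref{mixedintegersdp}, with $\delta^\top \bar z_{(a,\rho)} = \sum_{m_k \in \Mc} \gamma_{m_k}\delta_{m_k}$. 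The key observation is that the minimizations defining the various $J_\rho$ decouple across $\rho$ (they share no variables), so $Q(a,\Mc) \le \beta$ is equivalent to the existence of feasible $\{\bar z_{(a,\rho)}, P_{(a,\rho)}\}_{\rho}$ with $\sum_{\rho \in \Vc_{-a}} \pi_a(\rho)\,\delta^\top \bar z_{(a,\rho)} \le \beta$. Declaring these as decision variables for all pairs $(a,\rho)\in \Vc\times\Vc_{-a}$ turns the constraints into those of \eqref{mixedintegersdp}; because the objective drives $\beta$ downward, at the optimum $\beta = \max_a Q(a,\Mc)$ and no slack is introduced.

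Finally I would encode $\Mc\in\Dbb$. The binary vector $v$ with $\textbf{1}_{|\Dbb|}^\top v = 1$ selects a single dominating set, and $[z_1,\ldots,z_{|\Dbb|}]v$ returns its indicator vector; the big-M constraint $0 \le \bar z_{(a,\rho)} \le \tilde M\,[z_1,\ldots,z_{|\Dbb|}]v$ then pins the support of $\bar z_{(a,\rho)}$ to the chosen monitor set, which is exactly the restriction $\gamma_{m_k}>0$ only for $m_k\in\Mc$. Assembling the epigraph objective, the per-$a$ expected-impact bounds, the block LMIs over all $(a,\rho)$, and this selection mechanism reproduces \eqref{mixedintegersdp}, whence the optimal $v^\star$ identifies $\Mc^\star$.

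The main obstacle I expect is certifying exactness of the big-M relaxation, namely that $\tilde M$ can be chosen large enough never to truncate the optimal multipliers $\gamma_{m_k}$. This is where \textit{Theorem~\ref{th:dom_set_bound_impact}} is essential: since every $\Mc \in \Dbb$ is a dominating set, each $J_\rho(a,\Mc)$ is bounded, so finite optimal multipliers exist for every pair $(a,\rho)$ and every admissible set; taking $\tilde M$ above their maximum over the finite collection $\Dbb$ and all pairs makes the reformulation exact. A secondary point to verify is the min-of-sum/sum-of-minima interchange used in the folding step, which holds precisely because the $\rho$-indexed subproblems are independent, together with the elementary fact that $\min_x f(x)\le\beta$ is equivalent to the existence of a feasible point with value at most $\beta$.
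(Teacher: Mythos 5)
Your proposal is correct and follows essentially the same route as the paper's proof: the paper likewise (i) rewrites \eqref{Eq_M_star} in epigraph form with $\beta$ bounding $\sum_{\rho \in \Vc_{-a}} \pi_a(\rho) J_\rho(a,\Mc)$ for every $a \in \Vc$, (ii) substitutes the SDP \eqref{opt_LM_security_metric} with the multipliers $\gamma_{m_k}$ collected into the vector $\bar z_{(a,\rho)}$, and (iii) encodes the discrete choice of monitor set by a binary selector $v$ with the big-M linking constraint. The only real difference is where \textit{Theorem~\ref{th:dom_set_bound_impact}} enters: the paper first carries out the encoding over the collection of \emph{all} subsets of cardinality at most $n_s$ and then invokes the theorem to restrict that collection to $\Dbb$, whereas you work over $\Dbb$ from the outset and invoke the theorem to certify exactness of the big-M bound (existence of finite optimal multipliers) --- a point the paper leaves implicit.
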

\begin{proof}
    See Appendix~\ref{app:propositionbigLMI}.
\end{proof}
}

{
Successfully solving \eqref{mixedintegersdp} gives us the best monitor set $\Mc^\star$ represented by the optimal solution $v^\star$. However, dealing with the large mixed-integer SDP \eqref{mixedintegersdp}, which contains $N \times (N-1)$ attack scenarios for all possible pairs $(a,\rho)$, poses efficiency challenges in very large networks. To deal with such an issue, we leverage parallel computations mentioned in \textit{Remark~\ref{rem:parallel}} and propose \textit{Algorithm~\ref{al:parallel}.} 
The following section discusses how the proposed concept of dominating sets significantly alleviates the computational complexity in large networks.
}
% \tcr{
% as the following six steps: 1) find all the dominating sets; 2) request the number of computer cores the same as that of the found dominating sets; 3) assign every dominating set to the computer cores; 4) each computer core solves \eqref{Eq_a_star}; 5) collect the results from all the computer cores; and 6) solve \eqref{Eq_M_star} to obtain the best monitor set $\Mc^\star$. In the following section, the computational complexity of the proposed method will be discussed.
% }
% Finally, the procedure of finding the Stackelberg optimal action for the adversary and the defender $(a^\star(\Mc^\star),\Mc^\star)$ is summarized in \textit{Algorithm~\ref{al:best_strategy}}.
%\TN{Should we give an illustrative example of how to find the Stackelberg optimal strategy?}
%%

\begin{algorithm}[!t]
    \caption{Stackelberg optimal action {through parallel computations} \label{al:parallel}}
    \begin{algorithmic}[1]
    \Statex{{\bf Input:} { The vertex set $\Vc$, the edge set $\Ec$, the self-loop gains $\theta_i$, the alarm thresholds $\delta_i,~\forall i \in \Vc$, the sensor budget $n_s$, the cost of utilized sensors $\cfr(|\Mc|)$, the {conditional belief $\pi_a(\rho)$, and $n_c$ computer cores where $j$-th core is denoted as $U_j,~ j \in \{1,2,\ldots,n_c\}$.}
    }
   }		
   \Statex{{\bf Output:}} {
   The best monitor set $\Mc^\star$ and the best attack vertex $a^\star(\Mc^\star)$.
   }
   \Statex{{\bf Initialize:}} { {$\Dbb= \{\Mc \, | \, \Mc \subset \Vc,~|\Mc|\leq n_s,~ \text{\eqref{dom_set_cond}} \}$} }
   \State {{Equally divide $\Dbb$ into $n_c$ partitions $\{ \Dbb_1,\Dbb_2,\ldots,\Dbb_{n_c} \}$ where partition $\Dbb_j$ is assigned to computer core $U_j$.}}
   \For {{every computer core $U_j$}}
   \For {every dominating set $\Mc$ in $\Dbb_j$}
   \For {every pair $(a \, , \, \rho) \in \Vc \times \Vc_{-a}$}
   \State {solve \eqref{opt_LM_security_metric}.}
   \EndFor
   \State {Compute \eqref{game_payoff_def}.}
   \EndFor
   \EndFor
   \State Solve \eqref{Eq_M_star} to obtain $\Mc^\star$ and $a^\star(\Mc^\star)$.
   \end{algorithmic}
\end{algorithm}

\section{Computational Complexity}
\label{sec:comp}
%%-------------------------
In this section, we highlight the benefits of characterizing admissible monitor sets as dominating sets to the computation, especially in large-scale networked control systems. 
%%
%\subsection{Complexity in investigating large-scale networks}

{
Without the consideration of the collection $\Dbb$ in \textit{Proposition~\ref{proposition:bigLMI}}, the security allocation problem requires the defender to solve \eqref{mixedintegersdp} with the collection of all the subsets of the vertex set $\Vc$, which is $\Sbb$ defined in \eqref{app:ZSbb} (see Appendix~\ref{app:propositionbigLMI}). Next, we compute the cardinality of $\Sbb$.
Let us denote the cardinality of $\Sbb$
}
as $S(N,n_s)$ where $N$ is the number of vertices in the network and $n_s$ is the sensor budget. This number $S(N,n_s)$ can be computed as follows:
\begin{align}
    S(N,n_s) = \sum_{k = 1}^{n_s} {{N}\choose{k}}. \label{no_subset}
\end{align} 
This number $S(N,n_s)$ grows dramatically when either the number of vertices $N$ or the sensor budget $n_s$ increases due to $S(N,n_s) = \Oc(N^{n_s})$, where $\Oc$ stands for Big O notation. 
% This claim can be also shown by taking the following ratio:
% \begin{align}
%     \frac{S(N+1,n_s)}{S(N,n_s)} = \frac{(N+1) \sum_{k=1}^{n_s} \frac{n_s !}{k!} \prod_{i = 0}^{n_s-k} (N+1-k-i)}{\sum_{k=1}^{n_s}\frac{n_s!}{k!} \prod_{i = 0}^{n_s-k} (N-k-i)}. \label{ratio_S}
% \end{align}
% %%
% Consider $N$ as a variable, the above ratio \eqref{ratio_S} shows that its numerator is a polynomial of degree $n_s+1$ while its denominator is a polynomial of degree $n_s-1$.
% %%
% For an enough large number $N$ representing a large-scale network, one has $\frac{S(N+1,n_s)}{S(N,n_s)} \gg 1$.
% %%
% On the other hand, if we fix the number of vertices $N$ and increase the sensor budget $n_s$, one has
% \begin{align}
%     \frac{S(N,n_s+1)}{S(N,n_s)} = \frac{\sum_{k=1}^{n_s+1}{{N}\choose{k}} }{\sum_{k=1}^{n_s}{{N}\choose{k}} } = 1 + \frac{{{N}\choose{n_s}}}{\sum_{k=1}^{n_s}{{N}\choose{k}} } > 1.
% \end{align}
%%

%%-------------------------
\begin{figure}[!t]
    \centering
    \includegraphics[width=0.5\textwidth]{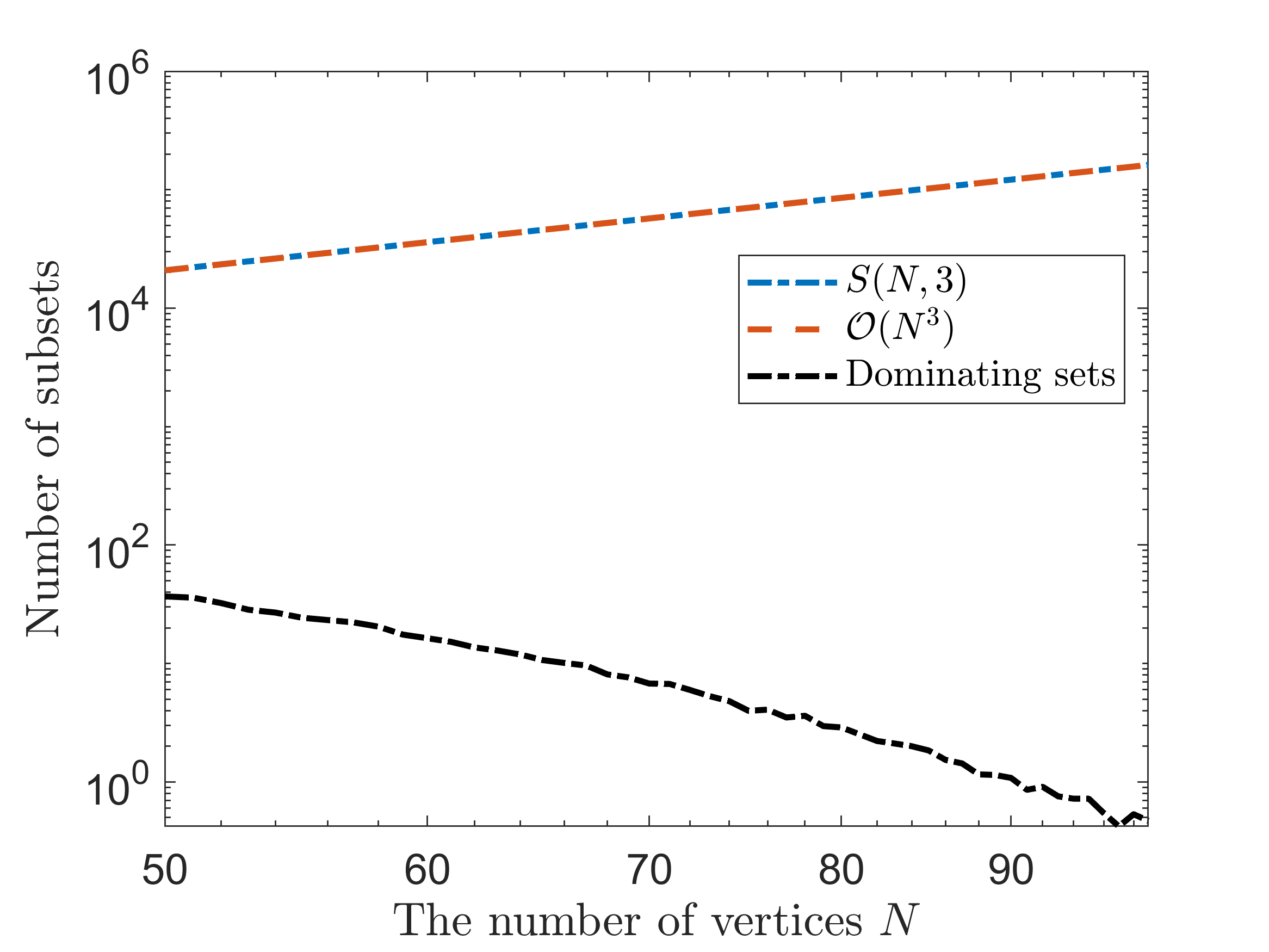}
    \caption{Given $n_s = 3$, the number of subsets of the vertex set $\Vc$ with respect to the number of vertices has the same slope as $\Oc(N^3)$. The average number of dominating sets is given through the Monte-Carlo simulation with 500 samples.}
    \label{fig:no_subset}
\end{figure}
%%-------------------------

% Let us take some numerical examples to illustrate the above claim.
% For $n_s = 3$ and $N = 50$, one has $S(50,3) = \sum_{k = 1}^{3} {{50}\choose{k}} = 20875$; for $n_s = 3$ and $N = 100$, one has $S(100,3) = \sum_{k = 1}^{3} {{100}\choose{k}} = 166750$; for $n_s = 4$ and $N = 50$ one has $S(50,4) = \sum_{k = 1}^{4} {{50}\choose{k}} = 251175$. It is noticed that $S(100,3)$ is almost eight times as much as $S(50,3)$ when we just double the number of vertices. On the other hand, $S(50,4)$ is almost twelve times as much as $S(50,3)$ when the sensor budget slightly increases from $3$ to $4$.
An illustration of the dramatic increase of $S(N,n_s)$ with respect to $N$ (blue dashed-dotted line) can be found in Figure~\ref{fig:no_subset} where it has the same slope as $\Oc(N^3)$ (red dashed line).
In Figure~\ref{fig:no_subset}, we also conduct Monte-Carlo simulations with 500 samples to count the number of dominating sets with respect to the size of the graph $N$, which is denoted as the black dashed-dotted line. 
In the Monte-Carlo simulations, we examine Erdős–Rényi random undirected connected graphs $G(N,q)$ where $N$ is the number of vertices and an edge is included to connect two vertices with probability $q = 0.5$ \cite{bollobas1998random}.
{By observing the results in Figure~\ref{fig:no_subset}, the number of dominating sets with a fixed sensor budget dramatically decreases when the size of networks increases. 
Therefore, adopting the concept of dominating sets in solving \eqref{mixedintegersdp} enables us to significantly reduce its computational complexity.
}

{Regarding the parallel computing proposed in \textit{Algorithm~\ref{al:parallel}}, the concept of dominating sets plays a crucial role in practice.
The number of possible actions computed by \eqref{no_subset} possibly requires the defender to employ a large number of working hours of computer cores, which are limited in practice. In fact,
}
%Since the number $S(N,n_s)$ in \eqref{no_subset} represents the possible actions available to the defender, the defender should examine all of these actions to seek their optimal action against the malicious adversary.
%%
as the number of vertices increases, the number of possible actions grows significantly (see Figure~\ref{fig:no_subset}), making {the solution methodology impractical in dealing with the security problem in very large networks.}
%it is increasingly difficult to investigate large-scale systems.
%%
In contrast, the number of dominating sets {with a fixed sensor budget} typically decreases with respect to the size of random graphs (as seen in the example in Figure~\ref{fig:no_subset}), {requiring greatly fewer working hours of computer cores.
As a result, the concept of dominating sets enables us to practically handle the security allocation problem in very large networks.
}

{
It is worth noting that the proposed method is carried out in the design phase, which can be computed offline. Moreover, the proposed method can be improved by utilizing parallel computing toolboxes and computer clusters as discussed above (see \textit{Remark~\ref{rem:parallel}} and \textit{Algorithm~\ref{al:parallel}}). 
Thus, the computational complexity does not significantly impact the implementation of the proposed method (see more discussions in \cite[Sec. V]{shukla2022robust}).}
In the next section, we are likely to show the effectiveness of the proposed security allocation scheme with the notion of dominating sets through a numerical example.
%%-------------------------
\section{Numerical example}
\label{sec:num}
%%-------------------------
In the first part of this section, 
%each step of \textit{Algorithm~\ref{al:best_strategy}} will be run to validate 
%the result in \textit{Theorem~\ref{th:dom_set_bound_impact}} and to find the Stackelberg optimal action for the defender and the malicious adversary in an example.
{
we validate the main result of this paper presented in \textit{Theorem~\ref{th:dom_set_bound_impact}} and find the Stackelberg optimal action for the defender and the malicious adversary in a numerical example.
}
In the remainder of this section, the alleviation in the computational complexity will be discussed.
{The simulation is performed using Matlab 2023b with YALMIP 2021 \cite{lofberg2004yalmip} and MOSEK solver on a personal computer with 2.9-GHz, 8-core Intel i7-10700 processor and  16 GB of RAM.}

To demonstrate the obtained results, let us consider an example of a $50$-vertex networked control system depicted in Figure~\ref{fig:50graphdom}. {The $50$-vertex graph is an Erdős–Rényi random undirected connected graph where an edge is included to connect two vertices with a probability of 0.5.}
Parameters of the system are selected as follows: $\theta_i = 0.5$, $\delta_i = 1,~\forall i \in \Vc$; the cost for the number of utilized sensors is set as $\cfr(|\Mc|) = \kappa |\Mc|$ where $\kappa = 5$; the beliefs of the defender and the malicious adversary in the location of the {target} vertex given an attack vertex are assumed to be uniformly distributed; and the sensor budget $n_s = 3$.
{It is worth noting that the mixed-integer SDP \eqref{mixedintegersdp} considers all possible pairs of $(a,\rho)  \in \Vc \times \Vc_{-\rho}$, yielding $50 \times 49 = 2450$ attack scenarios, which are considerable.}
\subsection{The Stackelberg optimal action}
First, we {begin with finding all the dominating sets of the considered $50$-vertex graph (see Figure~\ref{fig:50graphdom}).}
%start at {\textit{line~\ref{alg:dom_for}}} of \textit{Algorithm~\ref{al:best_strategy}}.
%%
By investigating all the subsets $\Mc \subset \Vc$  where $|\Mc| \leq n_s$, twenty subsets satisfy the necessary and sufficient condition \eqref{dom_set_cond}, which are dominating sets. 
One of those dominating sets is illustrated in Figure~\ref{fig:50graphdom} where elements of a dominating set are coded blue.
{The computational time for finding dominating sets with the sensor budget $n_s = 3$ is under one second.}

{Next, we validate the obtained result of \textit{Theorem~\ref{th:dom_set_bound_impact}}.}
From Figure~\ref{fig:50graphdom}, let us consider a system $\Sigma_{m_k} \triangleq (-\bar L,e_a,e_{m_k}^\top,0)$ where $e_a$ represents the input at any vertex and $e_{m_k}$ represents the monitor output at a blue vertex.
We simply examine that there exists at least one blue vertex such that the relative degree of $\Sigma_{m_k}$ is never greater than one.
% for any pair of an arbitrary vertex and at least one blue vertex.
%%
Thus, the cost for the defender and the expected worst-case impact of stealthy attacks are always bounded according to the result in \textit{Theorem~\ref{th:dom_set_bound_impact}}.
{To validate their boundedness, we compute the defense cost \eqref{game_payoff_def} and the expected worst-case impact of stealthy attacks \eqref{game_payoff_Q} for an arbitrary pair of a vertex $a \in \Vc$ and a dominating set $\Mc$.}
Through the computation, the maximum cost for the defender and the maximum expected worst-case impact of stealthy attacks are obtained as follows: $R(a,\Mc) \leq 50.2456$ and $Q(a,\Mc) \leq 48.4235$, which verifies the result in \textit{Theorem~\ref{th:dom_set_bound_impact}}.

Finally, 
{the best monitor set $\Mc^\star$ is found by directly solving \eqref{mixedintegersdp}.
}
%let us {proceed \textit{line~\ref{alg:find_M}}} of \textit{Algorithm~\ref{al:best_strategy}} to find the Stackelberg optimal action for the defender and the malicious adversary.
%We obtain the following results: $M^\star = \{2,31,34\}$,~$a^\star(\Mc^\star) = 9$, $R(a^\star(\Mc^\star),\Mc^\star) = 49.7985$, and $Q(a^\star(\Mc^\star),\Mc^\star) = 47.9764$. 
%%
%This optimal strategy is illustrated in Figure~\ref{fig:50graphdom}.
The optimal action $\Mc^\star$ for the defender consists of three blue vertices in Figure~\ref{fig:50graphdom} that yields the minimum cost of $R(a^\star(\Mc^\star),\Mc^\star) = 49.7985$. Given such an optimal action $\Mc^\star$, {the corresponding attack vertex $a^\star(\Mc^\star)$ yields the maximum expected worst-case impact of stealthy attacks $Q(a^\star(\Mc^\star),\Mc^\star) = 47.9764$.}
%the malicious adversary chooses the red vertex $a^\star(\Mc^\star)$ in Figure~\ref{fig:50graphdom} that allows them to maximize the expected worst-case impact of stealthy attacks at $Q(a^\star(\Mc^\star),\Mc^\star) = 47.9764$.

\subsection{Computational complexity}
As discussed above, the $50$-vertex networked control system (see Figure~\ref{fig:50graphdom}) gives us twenty dominating sets where the sensor budget is three ($n_s = 3$). 
This number is extremely smaller than the number of subsets of the vertex set which has at most $n_s$ elements,
i.e., $S(50,3) = 20875$. 
{Solving the mixed-integer SDP \eqref{mixedintegersdp} normally employs the branch-and-bound algorithm. Adopting the concept of dominating sets considers 20 branches instead of 20875 branches, significantly saving computational resources.
On the other hand, if we go for the parallel computing suggested in \textit{Algorithm~\ref{al:parallel}}, we only need to request 20 computer cores for those dominating sets, which are suitable for many computer cluster platforms.
}
% \tcr{In the following, we show the computational time of directly solving \eqref{mixedintegersdp} and implementing \textit{Algorithm~\ref{al:parallel}} with different network sizes.}
%{On the other hand, the network of 50 vertices requires us to consider $50 \times 49 = 2450$ attack scenarios.}
% There are $50$ possibilities of attack vertex $a$, leading to $49$ possibilities of target vertex $\rho$ due to $\rho \neq a$ (see \textit{Assumption~\ref{assumption:attack_not_performance}}).
% Thus, we only need to solve $20 \times 50 \times 49 = 49000$ optimization problems compared to $20875 \times 50 \times 49 = 51143750$ optimization problems for investigating all the possible monitor sets.
% %%
% We did an experiment of solving optimization problems \eqref{opt_LM_security_metric} by using CVX with Matlab version 2023a \cite{grant2014cvx} on a personal computer which has a configuration: CPU Intel Core i7-10700 2.9 GHz and  16 GB RAM DDR4.
% %%
% It took an average of $5.12$ seconds to solve the optimization problem \eqref{opt_LM_security_metric} once.
% Hence, the proposed algorithm through the concept of dominating sets in this paper approximately took $70$ hours instead of $72738$ hours. 
%%===================================
% \begin{figure}[!t]
%     \centering
%     \includegraphics[width=0.5\textwidth]{}
%     \caption{50-vertex networked control system.}
%     \label{fig:50puregraph}
% \end{figure}
\begin{figure} [!t]
    \centering
    \includegraphics[width=0.5\textwidth]{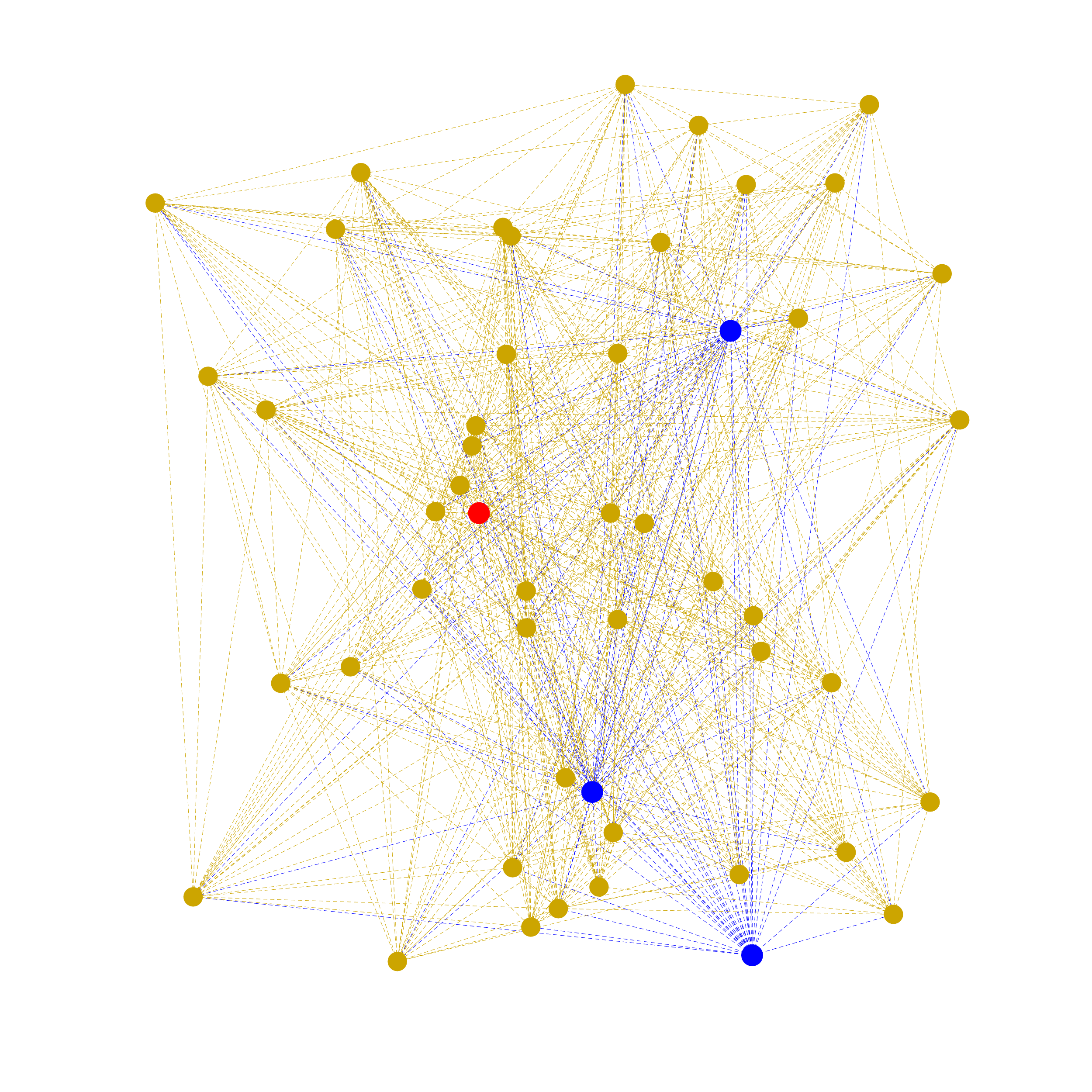}
    \caption{50-vertex graph where the optimal monitor vertices are coded blue and the optimal attack vertex is coded red.}
    \label{fig:50graphdom}
\end{figure}
%%===================================

%====================================
\section{Conclusion}
\label{sec:concl}
In this paper, we investigated the security allocation problem in a networked control system when faced with a stealthy data injection attack.
%by a malicious adversary.
%%
%It is reasonable to keep the local performance of the system secret, resulting in that we formulated the objective functions of the defender and the adversary through considering probabilistic locations of the local performance.
The uncertain {target} vertex allowed us to formulate the objective functions of the defender and the adversary by considering probabilistic locations of the {malicious target}.
We presented a necessary and sufficient condition based on dominating sets under which the defender guarantees the boundedness of their cost and the expected worst-case impact of stealthy attacks.
%%
%Since the defender should decide their action regardless of the presence of the adversary, 
The security allocation problem is cast in the Stackelberg game-theoretic framework where the defender plays the leader and the malicious adversary acts as the follower.
Then, we provided an algorithm to show the procedure of finding the Stackelberg optimal action.
%with the defender as the leader and the adversary as the follower of the game.
%%
The advantage of the proposed security allocation scheme was highlighted in the context of large-scale networks via a discussion on the computational burden and several numerical simulations.
{In future work, we can empower the adversary by allowing them to conduct attack signals on multiple vertices or sophisticated attacks such as multiplicative false data injection attacks and communication edge removal attacks. Further, we plan to characterize the Stackelberg optimal action for the defender and the adversary through graph properties such as centrality measures.
}
\appendices
\section{Proof of Lemma~\ref{lem:J_upbound}}
\label{app:J_upbound}
Showing \eqref{lem:Jtau1_upbound} is trivial when the monitor vertex set $\Mc$ has only one vertex. We assume that $\Mc$ has more than one monitor vertex. 
From the worst-case impact of stealthy attacks \eqref{Jtau1}, let us introduce the following optimization by removing $|\Mc|-1$ constraints except the constraint corresponding to a monitor vertex $m_k \in \Mc$ as follows:
		\begin{align}
			J_\rho(a,m_k) = \sup_{x(0)=0,~\zeta \in \Lc_{2e}} &~~ \norm{y_\rho}^2_{\Lc_{2}} \label{pf:Jtau_mk} \\
			\text{s.t.}~~~~~&~~ 
			\norm{y_{m_k}}^2_{\Lc_{2}} \leq \delta_{m_k}. \non 
		\end{align}
		The design of the optimization problem \eqref{pf:Jtau_mk} tells us that its feasible set contains the feasible set of the optimization problem \eqref{Jtau1}. Further, the two optimization problems \eqref{Jtau1} and \eqref{pf:Jtau_mk} have the same objective function. This implies that $J_\rho(a,\Mc) \leq J_\rho(a,m_k)$ for all $m_k \in \Mc$, directly resulting in \eqref{lem:Jtau1_upbound}.

\section{Proof of Lemma~\ref{lem:no_un_zero}}
\label{app:lem_pf_no_un_zero}
    Let us denote a tuple $(\bar \lambda_{m_k},\bar{x}_{m_k},\bar g_{m_k}) \in \Cbb \times \Cbb^N \times \Cbb$ as a zero dynamics of $\Sigma_{m_k}$, where a finite $\bar \lambda_{m_k}$ is called a finite invariant zero of $\Sigma_{m_k}$.
    From \textit{Definition~\ref{def:invariant_zero}}, one has that the tuple $(\bar \lambda_{m_k},\bar{x}_{m_k},\bar g_{m_k})$ satisfies
    \begin{align}
         \ba{cc}
        \bar \lambda_{m_k}  I + \bar L  & -e_a \\
        e_{m_k}^\top & 0
        \ea 
        \ba{c} \bar{x}_{m_k} \\ \bar g_{m_k} \ea 
        = 
        \ba{c} 0 \\ 0 \ea.
        \label{pen_mtr_lam_m}
    \end{align}
    The above equation is rewritten as
    \begin{align}
        \ba{cc}
        (\bar \lambda_{m_k} - \theta_0) I + \bar L + \theta_0 I & -e_a \\
        e_{m_k}^\top & 0
        \ea 
        \ba{c} \bar{x}_{m_k} \\ \bar g_{m_k} \ea 
        = 
        \ba{c} 0 \\ 0 \ea,
        \label{pen_mtr_lam_m1}
    \end{align}
    where $\theta_0 \in \Rbb_+$ is a uniform offset self-loop control gain.
    From \eqref{pen_mtr_lam_m1}, the finite value $(\bar \lambda_{m_k}-\theta_0) \in \Cbb$ is an invariant zero of a new state-space model $\tilde \Sigma_{m_k} \triangleq (-\bar L-\theta_0I,e_a,e^\top_{m_k},0)$.
    For all $\bar \lambda_{m_k} \in \Cbb$ satisfying \eqref{pen_mtr_lam_m1},
    the control gain $\theta_0$ can be adjusted such that $\theta_0 >$ Re$(\bar \lambda_{m_k})$, resulting in that $\tilde \Sigma_{m_k}$ has no finite unstable zero.
    Then, the self-loop control gains $\theta_i, ~ i \in \{1,2,\ldots,N\},$ in \eqref{sys:u} are tuned with $\theta_0$ such that the system $\Sigma_{m_k}$  is identical with $\tilde \Sigma_{m_k}$.
    By this tuning procedure, the system $\Sigma_{m_k}$ also has no finite unstable invariant zero.

%%====================
\section{Proof of Theorem~\ref{th:red_con}}
\label{app:th_pf_red_con}
    The result in \textit{Lemma~\ref{lem:inv_zero}} enables us to investigate invariant zeros of the systems $\Sigma_\rho$ and $\Sigma_{m_k},~\forall m_k \in \Mc$.
	Based on \textit{Lemma \ref{lem:no_un_zero}}, $\Sigma_{m_k}$ has no finite unstable invariant zero, which leaves us to analyze infinite invariant zeros of those systems.
    Recall the equivalence between the relative degree of a SISO system and the degree of its infinite zero (see \textit{Remark~\ref{rem:re_deg_tf}}), a necessary condition to guarantee the {boundedness} of the optimization problem \eqref{lem:Jtau1_upbound_def} is that there exists at least one system $\Sigma_{m_k} (m_k \in \Mc)$ such that 
    the number of its infinite invariant zeros is not greater than that of the system $\Sigma_\rho$.
 This implies  $r_{(m_k,a)} \leq r_{(\rho ,a)}$. 
 For sufficiency, it remains to show that 
 	if $r_{(m_k,a)} \leq r_{(\rho ,a)}$, all the infinite zeros of the system $\Sigma_{m_k}$ are also infinite zeros of the system $\Sigma_{\rho}$. The following proof is adapted from our previous results in \cite[Th. 7]{nguyen2022single}.
In the investigation, we make use of the definition of infinite invariant zeros in \cite[Def. 2.4]{morris2010invariant}.
 	We  investigate infinite zeros of $\Sigma_{m_k}$ and $\Sigma_{\rho}$
    by starting from
	their transfer functions with zero initial states
	\begin{align}
		G_{(\rho,a)}(s) &= e_\rho^\top (sI + \bar L)^{-1} e_a 
		= \frac{P_{(\rho,a)}(s)}{Q(s)}
		,\non \\
		G_{(m_k,a)}(s) &= e_{m_k}^\top (sI + \bar L)^{-1} e_a 
		= \frac{P_{(m_k, a)}(s)}{Q(s)}
		,
		\label{trans_fcn}
	\end{align}	
	where $s \in \Cbb$ is the Laplace complex variable.
	%%
	%Since the state-space models $\Sigma_\tau$ and $\Sigma_m$ are minimal realisation, 
	Based on \textit{Remark \ref{rem:re_deg_tf}}, it gives that
	$P_{(\rho,a)}(s)$, $P_{(m_k,a)}(s)$, and $Q(s)$ are the polynomials of  degrees $N-r_{(\rho,a)}$, $N-r_{(m_k,a)}$, and $N$, respectively.
	Let us denote $z_\tau = \sigma_\tau + j \omega_\tau \in \Cbb,~ \tau \in \{1,2,\ldots,r_{(m_k,a)} \}$ with infinite module as infinite invariant zeros of $\Sigma_{m_k}$. 
	Indeed, the zero $z_\tau~(1 \leq \tau \leq r_{(m_k,a)})$ is an infinite invariant zero of maximal degree $r_{(m_k,a)}$ of the system $\Sigma_{m_k}$ \cite[Def. 2.4]{morris2010invariant} if it satisfies
	\begin{align}
	    &\lim_{\norm{z_\tau} \ra \infty} z_\tau^q G_{(m_k,a)}(z_\tau) = 0, ~ (0 \leq q \leq r_{(m_k,a)} - 1), 
	    \non \\
	    &\lim_{\norm{z_\tau} \ra \infty} z_\tau^{r_{(m_k,a)}} G_{(m_k,a)}(z_\tau) \neq 0.
	\end{align}
	Further, with $0 \leq q \leq r_{(m_k,a)} - 1$, we also basically have
	\begin{align}
	    &\lim_{\norm{z_\tau} \ra \infty} z_\tau^q G_{(\rho,a)}(z_\tau) = 
        \lim_{\norm{z_\tau} \ra \infty}  \frac{z_\tau^q P_{(\rho,a)}(z_\tau)}{Q(z_\tau)} = 0. \label{Pta_inf_zero_1}
	\end{align}
The above limit \eqref{Pta_inf_zero_1} holds because the denominator $z_\tau^{q}P_{(\rho,a)}(z_\tau)$ is the polynomial of degree $N-r_{(\rho,a)}+q \leq N-1 < N$, where $N$ is  the degree of the polynomial $Q(z_\tau)$.
This implies that any infinite zeros $z_\tau$ of maximal degree $r_{(m_k,a)}$ of the system $\Sigma_{m_k}$ are also infinite zeros of degree $r_{(m_k,a)}$ of the system $\Sigma_\rho$. 

%%======================================
%%============================
\section{Proof of Lemma~\ref{lem:dominating_set}}
\label{app:lem_pf_dominating_set}
Let us decompose $\Cc(\Mc) = \Cc_A(\Mc) + \Cc_I(\Mc)$ where $\Cc_A(\Mc) = \sum_{{m_k} \in \Mc} A e_{m_k}$ and $\Cc_I(\Mc) = \sum_{{m_k} \in \Mc} e_{m_k}$. Entry $i$-th of $\Cc_I(\Mc)$ takes  $0$ if vertex $i$ does not belong to $\Mc$ and $1$ if vertex $i$ belongs to $\Mc$.
Entry $i$-th of $\Cc_A(\Mc)$ takes $0$ if all the neighbors of vertex $i$ do not belong to $\Mc$ and a non-zero value if at least one neighbor of vertex $i$ belongs to $\Mc$.
Thus, entry $i$-th of $\Cc(\Mc)$ takes $0$ if vertex $i$ and all of its neighbors do not belong to $\Mc$; takes a non-zero value if vertex $i$ or one of its neighbors belong to $\Mc$.
If the condition \eqref{dom_set_cond} fulfills, the vector $\Cc(\Mc)$ has no zero entry.
This implies that an arbitrary vertex in $\Vc$ is either a vertex of $\Mc$ or a neighbor of a vertex of $\Mc$, resulting in that $\Mc$ is a dominating set.
%%======================================
%%======================================
\section{Proof of Proposition~\ref{proposition:bigLMI}}
\label{app:propositionbigLMI}
{
The proof is constructed by the following three main steps. The first step modifies \eqref{opt_LM_security_metric} such that monitor vertices are represented by new binary variables. 
In the second step, the optimization problem \eqref{Eq_M_star} is rewritten by using \eqref{game_payoff_def}-\eqref{game_payoff_Q} and \eqref{Eq_a_star} to expose the presence of \eqref{opt_LM_security_metric}.
In the last step, the mixed-integer SDP problem \eqref{mixedintegersdp} is obtained by replacing the modified version of \eqref{opt_LM_security_metric} into the rewritten version of \eqref{Eq_M_star}, incorporating the proposed concept of dominating sets.
}

{ \textit{Step 1}:
Due to the sensor budget $n_s$, a monitor set is restricted as a selection from the collection of all the subsets of the vertex set $\Vc$ with a cardinality of at most $n_s$, which is denoted as $\Sbb$. 
Define an $N$-dimensional binary vector $z_i$ as a representation of the $i$-th subset in the collection $\Sbb$ where $j$-th entry of $z_i$ being equal to 1 indicates that vertex $j$ belongs to the $i$-th subset.
The collection $\Sbb$ can thus be compactly represented as the following binary matrix:
\begin{align}
    Z_{\Sbb} = \left[ z_1, z_2, \ldots, z_{|{\Sbb}|} \right]. \label{app:ZSbb}
\end{align}
Define a $|{\Sbb}|$-dimensional binary vector $v$ as a selection from the collection $\Sbb$ where $j$-th entry of $v$ being equal to 1 indicates that $j$-th subset in $\Sbb$ is chosen. Therefore, a monitor set $\Mc$, which is chosen by the defender, can be represented as follows:
\begin{align}
    \sum_{m \, \in \, \Mc} e_m = Z_{\Sbb} \, v,~~ \textbf{1}_{|\Sbb|}^\top v = 1. \label{app:zv}
\end{align}
For a given pair $(a,\rho)$, define an $N$-dimensional non-negative vector $\bar z_{(a,\rho)} \in \Rbb^N$ as a replacement of the variable $\gamma_{m_k}$ in \eqref{opt_LM_security_metric} such that $i$-th entry of $\bar z_{(a,\rho)}$ is greater than zero if vertex $i$ belongs to the chosen monitor set $\Mc$, otherwise it is zero. 
The representation of the chosen monitor set $\Mc$ in \eqref{app:zv} enables us to formulate the above-defined constraint of the new variable $\bar z_{(a,\rho)}$  as follows:
\begin{align}
    0 \leq \bar z_{(a,\rho)} \leq \tilde M \, Z_{\Sbb} \, v,
\end{align}
where $\tilde M$ is a large positive number, also called a big ``M'' \cite{milovsevic2023strategic}. It is worth noting that the structure of the new variable $\bar z_{(a,\rho)}$ enables us to rewrite the objective function of \eqref{opt_LM_security_metric} and the last term of its constraint as follows:
\begin{align}
    \delta^\top \, \bar z_{(a,\rho)} &= \sum_{m_k \in \Mc} \delta_{m_k} \, \gamma_{m_k}, \label{app:objz}\\
    \textbf{diag} \left(\ba{c}
	\bar z_{(a,\rho)} \\ 0
	\ea \right) &= \sum_{m_k \in \Mc} \gamma_{m_k} \ba{c}
	e_{m_k} \\ 0
	\ea 
	\ba{cc}
	e_{m_k}^\top & 0
	\ea, \non
\end{align}
where $\delta = [\delta_1,\delta_2,\ldots,\delta_N]^\top$ is the vector of all the alarm thresholds.
By invoking \eqref{app:ZSbb}-\eqref{app:objz}, the output-to-output gain security metric \eqref{opt_LM_security_metric} for a given pair $(a,\rho)$ can be rewritten as follows:
\begin{align}
    J_\rho(a,\Mc) =&  \min_{\bar z_{(a,\rho)}  \in  \Rbb^N, \, v \, \in \, \{0,1\}^{|\Sbb|}, \, P_{(a,\rho)} \, \in \, \Rbb^{N \times N}} ~ \delta^\top \bar z_{(a,\rho)} \label{app:biglmi} \\
    \text{s.t.} ~
    & \textbf{1}_{|\Sbb|}^\top \, v = 1,~  0 \leq \bar z_{(a,\rho)} \leq \tilde M \, Z_{\Sbb} \, v, 
    \non \\
    & \hspace{-1cm}
    \ba{cc}
	-\bar L P_{(a,\rho)} - P_{(a,\rho)}  \bar L & P_{(a,\rho)}  e_a \\
	e_a^\top P_{(a,\rho)} & 0
	\ea  + \textbf{diag} \left(\ba{c}
	e_\rho \\ 0
	\ea \right)
    \non \\
    & 
    - \textbf{diag} \left(\ba{c}
	\bar z_{(a,\rho)} \\ 0
	\ea \right) \leq 0,~ P_{(a,\rho)} = P_{(a,\rho)}^\top \geq 0. \non
\end{align}
}

{\textit{Step 2:}
On the other hand,
from \eqref{game_payoff_def}-\eqref{game_payoff_Q} and \eqref{Eq_M_star}, the optimization problem $\min_{\Mc \, \in \, \Dbb} \, R(a^\star(\Mc),\Mc)$ in \eqref{Eq_a_star} can be rewritten equivalently as:
\begin{align}
    \min_{\Mc \, \in \, \Dbb, \,\beta \, > \, 0} ~& \bigg[ ~c(|\Mc|) + \beta ~\bigg] \label{app:Ropt}
    \\
    \text{s.t.}~~~&
    \sum_{\rho \in \Vc_{-a}} \pi_a(\rho) J_\rho (a,\Mc) ~\leq ~\beta, ~~ \forall \, a \, \in \, \Vc.
    \non
\end{align}
}

{\textit{Step 3:}
Intuitively, the boundedness of \eqref{app:Ropt} is guaranteed by the boundedness of \eqref{app:biglmi} $\forall \, a \in \Vc$.
Fortunately, the result of \textit{Theorem~\ref{th:dom_set_bound_impact}} implies that \eqref{app:biglmi} is bounded $\forall \, a \in \Vc$ if, and only if, the monitor set $\Mc$ is a dominating set. Therefore, the collection $\Sbb$ in \eqref{app:biglmi} can be restricted into the collection $\Dbb$ of all the dominating sets. This restriction and \eqref{app:biglmi}-\eqref{app:Ropt} gives us \eqref{mixedintegersdp}.
}

%\newpage
\bibliographystyle{IEEEtran}
\bibliography{mybibfile}

\end{document}